\documentclass[aip,cha,amsmath,amssymb,preprint]{revtex4-1}

\usepackage{xcolor}
\usepackage{bm}
\usepackage{amsthm}

\usepackage{physics}
\usepackage{mathtools}
\mathtoolsset{showonlyrefs}

\theoremstyle{plain}
\newtheorem{theorem}{Theorem}[section]
\newtheorem{proposition}[theorem]{Proposition}

\newtheorem{corollary}[theorem]{Corollary}
\theoremstyle{definition}
\newtheorem{definition}[theorem]{Definition}

\theoremstyle{remark}
\newtheorem{remark}[theorem]{Remark}

\newcommand{\cH}{\mathcal H}
\newcommand{\cL}{\mathcal L}
\newcommand{\cE}{\mathcal E}
\newcommand{\cD}{\mathcal D}
\newcommand{\cR}{\mathcal R}
\newcommand{\cK}{\mathcal K}
\newcommand{\cP}{\mathcal P}
\newcommand{\cS}{\mathcal S}
\newcommand{\cN}{\mathcal N}
\newcommand{\cB}{\mathcal B}
\newcommand{\cM}{\mathcal M}
\newcommand{\id}{\mathrm{id}}

\begin{document}

\title{Code-Visible Liouvillian Modes and Coherent Logical Errors in Quantum Codes}
\author{Marina Gordeychuk}
\email{m.lisnichenko@innopolis.university}
\author{Oleg Kiselev}
\email{o.kiselev@innopolis.ru}
\affiliation{Innopolis University, Russia}

\date{10 July 2026}

\begin{abstract}
Coherent Hamiltonian drifts and systematic errors can pass through quantum
error correction differently than stochastic Pauli noise: they generate
phase components of the logical channel that are lost under full Pauli
twirling. We consider the Lindbladian dynamics of \(n\) physical qubits
propagated through an effective logical channel of a fixed encoding,
recovery, and decoding scheme. The work is methodological in character:
standard ideas of observability and minimal realization from linear systems
theory are applied to the channel
\[
    \omega
    \mapsto
    \Phi_t(\omega)
    =
    \cD e^{t\cL}\cE(\omega),
\]
where \(\cE\) encodes the logical state and \(\cD\) is the chosen recovery
map with final decoding. The minimal code-visible space \(\cK_{\rm code}\)
is defined as the Krylov closure of the decodable logical observables with
respect to \(\cL^\dagger\); its observable factorization yields a minimal
realization of the logical channel. In this form, the Knill--Laflamme
conditions serve as a short-time consistency check, Pauli noise on
stabilizer codes reduces to a chain on syndrome-logical classes, and the
choice of recovery map acts as a spectral filter on the visible modes. For
quantum memory, the same language yields an estimate of the storage time in
terms of the dominant visible eigenvalues of the logical channel of the
correction cycle. Particular attention is paid to the spectral signatures
of coherent noise and to how these signatures depend on the chosen decoder.
\end{abstract}

\pacs{03.67.Pp, 03.67.Lx, 03.65.Yz}

\maketitle 

\section{Introduction}

A quantum error-correcting code preserves logical information embedded in a
designated code subspace or subsystem, up to the chosen recovery procedure.
To analyze the decoherence of codes, we distinguish the full physical
trajectory of the state in the \(2^n\)-dimensional space, the syndrome
information that a fixed code can extract, and the effective logical
channel obtained after recovery and decoding. It is precisely the latter
object that is operationally observable for a memory or a logical block of
computation.

For quantum memory this reduction is especially natural. The quality of the
memory is measured by the closeness of the storage logical channel to the
identity channel. Under periodic recovery, the long-time behavior is set by
the spectrum of the single-cycle logical channel. The actual storage time
of logical information is determined by those physical modes that survive
encoding, recovery, and decoding.

A separate motivation comes from coherent errors. In threshold calculations
and in fast simulations, noise is often replaced by a Pauli channel, e.g.,
via Pauli twirling or randomized compiling
\cite{WallmanEmerson2016,KatabarwaGeller2015,CaiXuBenjamin2020}. Such a
replacement is useful, but it erases the phase relations between the Pauli
components of the noise. For a systematic Hamiltonian drift, these phases
can turn into logical rotations and non-monotonic logical fidelity; on
surface-code models this manifests as a difference between coherent
physical noise and its Pauli-twirled approximation
\cite{BravyiEnglbrechtKoenigPeard2018}. This calls for a tool that starts
from the continuous Lindblad/GKSL generator and shows which of its spectral
components actually reach the logical channel after a chosen recovery map.

In the classic works of Shor, Steane, Calderbank--Shor, and
Knill--Laflamme \cite{Shor1995,Steane1996,CalderbankShor1996,KnillLaflamme1997}
the idea of encoding quantum information in a larger Hilbert space was
formulated, and conditions for exact recovery were obtained. In
Gottesman's stabilizer language \cite{Gottesman1997} these conditions take
an especially transparent form: an error is identified by its syndrome, and
the residual after correction is either a stabilizer or a nontrivial
logical operator. The relation of quantum codes to channels, entanglement,
and distillation was developed in \cite{Bennett1996}. General Markovian
decoherence is described by Lindblad/GKSL generators
\cite{GKS1976,Lindblad1976}, and decoherence-free subspaces give a limiting
case of a code for which the corresponding class of errors has a trivial
logical image \cite{ZanardiRasetti1997,LidarChuangWhaley1998}. The
operator and algebraic formulations of quantum error correction in the
Heisenberg picture were developed in
\cite{BenyKempfKribs2007,BenyKempfKribsPRA2007}, and conditions for
correctability under continuous Lindblad/Hamiltonian dynamics were studied
in \cite{OreshkovLidarBrun2008}. On the other hand, the linear-algebraic
idea of an observable subspace and minimal realization goes back to
classical control theory \cite{Kalman1960,HoKalman1966}.

The subject of this work is the spectral reduction of a given code,
recovery map, and physical Lindbladian generator to an effective logical
channel. The full Liouvillian acts on the \(4^n\)-dimensional space of
operators. The logical channel for \(k\) qubits is described by at most
\(16^k\) real parameters. This dimensional gap gives rise to a natural
projection of physical modes onto logically distinguishable modes.

We call such a mode a \emph{code-visible Liouvillian mode}: it is a
spectral component of the physical dynamics that gives a nonzero
contribution to at least one matrix element of the effective logical
channel.

Components that change only unused syndrome coherences, states outside the
recovery domain, or physical degrees of freedom discarded by the decoder
belong to the kernel of the logical projection. Logical decoherence for a
given code is set precisely by the components that change the channel
\(\Phi_t\).

The contribution of this work is a QEC specialization of this standard
linear-systems apparatus. First, a finite algorithm is given for
constructing the minimal observable realization \(\widehat{\cK}_{\rm
code}\) from the matrices \(\cL^\dagger,V,\cD\); this replaces the full
evolution on the \(4^n\)-dimensional operator space with a smaller matrix
that contains exactly the logically visible dynamics. Second, the role of
the recovery map is formulated as that of a spectral filter, and the
mismatch between the decoder and coherent noise becomes visible at the
level of complex logical modes. Third, the same reduction is applied to
quantum memory, where the storage time is expressed through the dominant
eigenvalues of the single-cycle logical channel. The examples below are
chosen to be minimal: they serve as a check of the formalism and separate
the contribution of the recovery map from already known correctability
conditions.

We collect the main elements of the work below.

Let \(\cL\) be a finite-dimensional Lindblad/GKSL generator on the physical
qubits, \(\cE\) a fixed encoding of a logical system \(\cH_L\), and \(\cD\)
a fixed recovery map with decoding. Set
\[
    \Phi_t=\cD e^{t\cL}\cE .
\]
Let \(\{F_j\}_{j=1}^{d_L^2}\) be a basis of logical observables on
\(\cH_L\), \(d_L=\dim\cH_L\), and
\[
    \cK_{\rm code}
    =
    \operatorname{span}_{\mathbb C}
    \{(\cL^\dagger)^m \cD^\dagger(F_j)\colon
    j=1,\ldots,d_L^2,\ m\ge0\}.
\]

Then the effective logical dynamics \(\Phi_t\) is completely determined by
the restriction \(A=\cL^\dagger|_{\cK_{\rm code}}\), and the space
\(\cK_{\rm code}\) together with its truly observable factorization is
constructed by a finite Krylov-closure algorithm, yielding an effective
model of logical decoherence without integrating the full physical
dynamics (Proposition \ref{prop:visible-subspace} and Theorem
\ref{thm:krylov-visible-algorithm}). The standard Jordan decomposition of
this minimal realization gives three elementary types of visible spectral
contributions to logical decoherence --- aperiodic, damped oscillatory,
and undamped oscillatory --- where sinusoidal components of the logical
channel arise if and only if \(A\) has a code-visible complex conjugate
pair of eigenvalues, and for a Hamiltonian drift such pairs are a
spectral signature of a coherent logical error (Theorem
\ref{thm:strict-visible-classification}). The Knill--Laflamme conditions
for Lindblad errors give a short-time check of the reduction: the
recovered logical channel has zero dissipative first order in time, and
the remaining first order is a logical Hamiltonian drift in the absence
of compensation (Proposition \ref{prop:first-order-correction}). For a
stabilizer code and Pauli-Lindblad noise, the dynamics reduces to a
finite chain on Pauli classes, and after recovery, to a logical Pauli
channel, where under detailed balance damped oscillations are absent in
this classical reduction (Theorem \ref{thm:pauli-chain-reduction} and
Corollary \ref{cor:detailed-balance-no-oscillations}); and for a recovery
map obtained by logical post-processing, the visible space is contained in
the original one: incomparable recovery maps can see different parts of
the same Liouvillian, which gives a formal model of decoder mismatch for
coherent noise (Theorem \ref{thm:recovery-comparison} and Example
\ref{prop:same-drift-two-recoveries}). A quantum memory with periodic
recovery is described by powers of the single-cycle logical channel,
where the dominant visible eigenvalue modulus of this channel sets the
storage time, and complex modes give oscillations and revivals of
fidelity (Theorem \ref{thm:memory-visible-modes} and Proposition
\ref{prop:error-correction-memory-scaling}); finally, the dimension
of the code-distinguishable asymptotics is bounded above by
\(d_L^4-d_L^2\), i.e., by the dimension of the affine space of CPTP
channels on the logical system (Theorem
\ref{thm:asymptotic-code-channels}).

\begin{remark}
Below, the word ``classification'' refers to the fixed code projection
\(\Phi_t=\cD e^{t\cL}\cE\). The object of classification is those parts of
the spectrum of the physical Lindblad/GKSL generator that the chosen code
and the chosen recovery map turn into logical decoherence.
\end{remark}

\section{Geometry of a quantum error-correcting code}
\label{sec:code-geometry}

Let \(\cH_L\) be a logical Hilbert space of dimension
\(d_L=2^k\), and let
\[
    \cH_P=(\mathbb C^2)^{\otimes n}
\]
be the space of \(n\) physical qubits. A subspace code is given by an
isometry
\[
    V:\cH_L\to\cH_P,
    \qquad
    V^\dagger V=I_L.
\]
The code projector is
\[
    P=VV^\dagger.
\]

Encoding as a quantum channel has the form
\begin{equation}
    \cE(\omega)=V\omega V^\dagger,
    \qquad
    \omega\in\cB(\cH_L).
\end{equation}
Recovery together with decoding will be described by a single CPTP channel
\[
    \cD:\cB(\cH_P)\to\cB(\cH_L).
\]

One may think of \(\cD\) as first measuring the syndrome and applying a
correction, and then identifying the code subspace with \(\cH_L\). This
combination is convenient because all operationally significant
quantities depend only on the composition
\[
    \Phi=\cD\cN\cE,
\]
where \(\cN\) is the physical noise channel.

\begin{definition}
For a physical channel \(\cN\), the effective logical channel of the code
\((\cE,\cD)\) is defined as
\begin{equation}
    \Phi_{\cN}^{\rm code}
    :=
    \cD\cN\cE .
\end{equation}
If the physical noise is given by a semigroup
\(\cN_t=e^{t\cL}\), we write
\begin{equation}
    \Phi_t
    =
    \cD e^{t\cL}\cE .
\end{equation}
\end{definition}

Decoherence of the code is described by the change of the logical channel
\(\Phi_t\). Two physical processes are considered logically
indistinguishable if, after recovery and decoding, they give the same
channel on \(\cH_L\).

\subsection{Knill--Laflamme conditions}

Suppose the physical noise has a set of errors \(E_a\). A subspace code
with projector \(P\) corrects this set if and only if there exist numbers
\(c_{ab}\) such that
\begin{equation}
\label{eq:knill-laflamme}
    P E_a^\dagger E_b P=c_{ab}P .
\end{equation}

These are the Knill--Laflamme conditions \cite{KnillLaflamme1997}. Their
geometric meaning is that the syndrome information is determined by the
error, and the overlap matrix of the error subspaces is the same for all
logical states within the code.

\begin{proposition}[Logical invisibility of correctable errors]
\label{prop:kl-invisibility}
Let \(\{E_a\}\) satisfy the Knill--Laflamme conditions
\eqref{eq:knill-laflamme}, and let \(\cN\) be any CPTP channel on
\(\cB(\cH_P)\) whose Kraus operators lie in the linear span of
\(\{E_a\}\). Then there exists a recovery map \(\cD\) such that
\begin{equation}
    \cD\cN\cE=\id_L .
\end{equation}
\end{proposition}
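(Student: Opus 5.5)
The plan is the standard Knill--Laflamme construction. First I will diagonalize the error algebra, then build the recovery as a syndrome measurement followed by unitary correction, and finally compose it with decoding by \(V^\dagger\).

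\emph{Step 1: diagonalize the errors.} The matrix \(C=(c_{ab})\) is Hermitian and positive semidefinite. Hermiticity follows by taking adjoints in \eqref{eq:knill-laflamme}, and positivity because \(\sum_{ab}\bar x_a x_b c_{ab}P=P E_x^\dagger E_x P\ge0\) with \(E_x=\sum_b x_bE_b\). Let \(u\) be unitary with \(u^\dagger C u=\operatorname{diag}(d_1,\dots)\), and set \(G_k=\sum_a u_{ak}E_a\). Then \(PG_k^\dagger G_lP=d_k\delta_{kl}P\), and \(\operatorname{span}\{G_k\}=\operatorname{span}\{E_a\}\). For every \(k\) with \(d_k=0\) we get \(G_kP=0\), so only \(k\) with \(d_k>0\) matter.

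\emph{Step 2: polar decomposition.} For \(d_k>0\), polar decomposition gives \(G_kP=\sqrt{d_k}\,U_kP\) with \(U_k\) unitary. The projectors \(P_k=U_kPU_k^\dagger=G_kPG_k^\dagger/d_k\) are mutually orthogonal, since \(P_kP_l\propto G_kPG_k^\dagger G_lPG_l^\dagger=d_k\delta_{kl}\,G_kPG_l^\dagger\). I then define the recovery channel
\[
  \cR(\rho)=\sum_k U_k^\dagger P_k\rho P_kU_k+\operatorname{Tr}(Q\rho)\,\sigma_0,\qquad Q=I-\sum_kP_k,
\]
where \(\sigma_0\) is any fixed code state. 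This map is CPTP. Composing with decoding gives \(\cD(\rho)=V^\dagger\cR(\rho)V\), which is CPTP from \(\cB(\cH_P)\) to \(\cB(\cH_L)\), because \(\cR\) outputs states supported on \(P\) and \(V^\dagger(\cdot)V\) is trace preserving on such operators.

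\emph{Step 3: verify \(\cD\cN\cE=\id_L\).} Each Kraus operator \(N_i\) of \(\cN\) expands as \(N_i=\sum_k \alpha_{ik}G_k\). Then \(N_iV=\sum_k\alpha_{ik}\sqrt{d_k}\,U_kV\), using \(V=PV\). The Kraus operators of the recovery satisfy \(U_l^\dagger P_lU_kV=\delta_{kl}V\), since \(P_lU_kP=\delta_{kl}U_kP\). Hence \(U_l^\dagger P_lN_iV=\alpha_{il}\sqrt{d_l}\,V\), and the \(Q\)-branch vanishes on \(N_i\cE(\omega)N_i^\dagger\). Therefore
\[
  \cD\cN\cE(\omega)=\Bigl(\sum_{i,l}|\alpha_{il}|^2d_l\Bigr)\,\omega .
\]
The scalar equals \(1\) by trace preservation of \(\cN\), because \(\sum_i V^\dagger N_i^\dagger N_iV=\sum_{i,l}|\alpha_{il}|^2d_l\,I_L=I_L\).

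The main obstacle I expect is the bookkeeping in Step 2: checking that \(G_kP=\sqrt{d_k}U_kP\) and that the \(P_k\) are orthogonal, so that the syndrome branches do not interfere. The rest is direct algebra.
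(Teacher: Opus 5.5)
Your proof is correct and follows the same route as the paper. You diagonalize \(c_{ab}\), pass to an equivalent error set \(G_k\) that maps the code space into mutually orthogonal syndrome subspaces, then measure the syndrome and undo the isometry \(U_k\) on each branch; beyond the paper's sketch, you also supply the polar decomposition, the extra branch \(Q\) that makes \(\cR\) trace preserving, and the normalization \(\sum_{i,l}|\alpha_{il}|^2 d_l=1\), which follows from trace preservation of \(\cN\).
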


\begin{proof}
This is the standard form of the correctability criterion. Condition
\eqref{eq:knill-laflamme} means that the subspaces \(E_aP\cH_P\) have the
same overlap matrix for all logical states. After diagonalizing the matrix
\(c_{ab}\), one can choose an equivalent set of errors that map the code
subspace into mutually orthogonal syndrome subspaces. Measuring the
syndrome and applying the inverse isometry on each such subspace recovers
the original \(\omega\).
\end{proof}

\begin{remark}
The Knill--Laflamme conditions are static: they specify a correctable
discrete set of operators. Under continuous Lindblad dynamics, this is
supplemented by the rates at which errors occur and by the spectral modes
of the physical generator that remain visible after \(\cD\).
\end{remark}

\subsection{Stabilizer geometry of syndromes}

For a stabilizer \([[n,k,d]]\) code, one chooses an abelian subgroup
\(\cS\) of the Pauli group \(\cP_n\) not containing \(-I\). The code space
is the common \(+1\)-eigenspace of the stabilizers:
\[
    \cH_C
    =
    \{\,\psi\in\cH_P\colon S\psi=\psi,\ S\in\cS\,\}.
\]
Stabilizers are Pauli operators that leave every valid code state
unchanged. Hence an error-free state yields the outcome \(+1\) when
measuring each stabilizer. If an error anticommutes with some stabilizer,
the sign flips; the collection of these signs is the syndrome. The
syndrome of a Pauli operator \(E\in\cP_n\) is defined by the commutation
signs with the independent stabilizer generators
\[
    S_1,\ldots,S_{n-k}.
\]
If \(ES_j=(-1)^{s_j}S_jE\), then
\[
    s(E)=(s_1,\ldots,s_{n-k})\in\mathbb F_2^{n-k}.
\]
Choosing a recovery map means choosing a Pauli operator \(R_s\) for each
syndrome \(s\). After an error \(E\) and correction \(R_{s(E)}\), the
residual operator
\[
    R_{s(E)}E
\]
commutes with all stabilizers, i.e., it belongs to the normalizer
\(\cN(\cS)\). Its class in the quotient
\[
    \cN(\cS)/\cS
\]
is a logical Pauli operator. Stabilizer recovery uses the syndrome of the
error together with its logical residue.

\begin{definition}
For a fixed choice of recovery \(R_s\), define the logical residue of a
Pauli error \(E\) as
\begin{equation}
\label{eq:logical-residue}
    \ell(E)
    =
    [R_{s(E)}E]\in \cN(\cS)/\cS .
\end{equation}
The error is corrected successfully if and only if \(\ell(E)\) is the
trivial class.
\end{definition}

This formula is the stabilizer analogue of Bell-population statistics in
teleportation: the full physical error \(E\) is projected onto a finite
logical class \(\ell(E)\), and it is precisely this class that determines
the logical channel after recovery. A physical error acts on the actual
qubits. A logical error is the residual action on the encoded information
after syndrome measurement and correction. If the residual class is
trivial, the logical state is preserved; if the class is nontrivial, the
error becomes a nontrivial logical Pauli operator (for
\(k=1\) logical qubit, a logical \(X\), \(Y\), or \(Z\); for \(k>1\), a
nontrivial logical Pauli string).

\section{Lindbladian formulation for a code}
\label{sec:lindblad-code}

Let the physical evolution be given by
\begin{equation}
    \dot\rho(t)=\cL(\rho(t)),
\end{equation}
where
\begin{equation}
\label{eq:lindblad}
    \cL(\rho)
    =
    -i[H,\rho]
    +
    \sum_\alpha
    \left(
        L_\alpha\rho L_\alpha^\dagger
        -
        \frac12\{L_\alpha^\dagger L_\alpha,\rho\}
    \right).
\end{equation}
For an encoded initial state \(\rho(0)=V\omega V^\dagger\), the
operationally significant trajectory is
\[
    \Phi_t(\omega)=\cD(e^{t\cL}(V\omega V^\dagger)).
\]
Here \(\rho(t)\) is the physical state of all \(n\) qubits, while
\(\Phi_t(\omega)\) is the logical state after noise, recovery, and
decoding. It is \(\Phi_t\) that measures the quality of memory or of a
logical computational block. This setup separates stability tasks at
different levels: preserving the physical code state \(V\omega
V^\dagger\), preserving the code subspace as a set, preserving the
syndrome-correctable region, and preserving the effective logical channel
\(\Phi_t\). These are different levels of description. For instance, a
correctable error can move the state out of the code subspace, while after
recovery the logical channel remains the identity.

\subsection{Relation to observability of linear systems}

The following construction adapts finite-dimensional observability
\cite{Kalman1960,HoKalman1966} to the channel \(\Phi_t\). If one
vectorizes the space of physical operators, the
adjoint Lindblad dynamics takes the form
\[
    \dot X=\cL^\dagger X .
\]
The output coordinates in this problem are fixed by the matrix elements of
the recovered logical channel:
\[
    y_{j,\omega}(t)
    =
    \operatorname{Tr}
    \left[
        e^{t\cL^\dagger}\cD^\dagger(F_j)\,
        V\omega V^\dagger
    \right].
\]

In the classical notation \(\dot x=Ax,\ y=Cx\), the observable part is
given by the linear span of \(C,CA,CA^2,\ldots\), or, in adjoint form, by
the Krylov closure of the initial output observables. In the present
problem the role of these initial observables is played by the operators
\(\cD^\dagger(F_j)\), while the compression onto the code is given by the
map
\[
    \Gamma(X)=V^\dagger X V .
\]

The map \(\Gamma\) takes a physical observable \(X\) and looks only at its
action inside the code subspace. Hence \(\Gamma(X)=0\) means: the
observable \(X\) may exist in the full physical system, but the encoded
logical block does not see it. This is why the terms ``code-visible mode''
and ``unobservable subspace'' should be understood as a QEC specialization
of the standard notions of observability. The physical content of this
specialization is that the output map is set by the chosen encoding,
recovery map, and decoding.

\subsection{Code-visible space of observables}

Passing to the adjoint generator lets us immediately isolate the visible
part of the dynamics. Let \(\{F_j\}_{j=1}^{d_L^2}\) be a basis of
\(\cB(\cH_L)\). For any logical state \(\omega\),
\begin{equation}
\label{eq:heisenberg-code-pairing}
    \operatorname{Tr}
    \left[
        F_j \Phi_t(\omega)
    \right]
    =
    \operatorname{Tr}
    \left[
        e^{t\cL^\dagger}\cD^\dagger(F_j)\,
        V\omega V^\dagger
    \right].
\end{equation}
Hence all matrix elements of the logical channel are determined by the
evolution of the observables \(\cD^\dagger(F_j)\).

\begin{definition}
The minimal code-visible invariant space is defined as
\begin{equation}
\label{eq:k-code-definition}
    \cK_{\rm code}
    =
    \operatorname{span}_{\mathbb C}
    \{(\cL^\dagger)^m \cD^\dagger(F_j)\colon
    j=1,\ldots,d_L^2,\ m\ge0\}.
\end{equation}
\end{definition}

\begin{proposition}[Minimal visible reduction]
\label{prop:visible-subspace}
The space \(\cK_{\rm code}\) is finite-dimensional, invariant under
\(\cL^\dagger\), and the effective logical channel \(\Phi_t\) is
completely determined by the restriction
\[
    A=\cL^\dagger|_{\cK_{\rm code}}.
\]
More precisely, if for all \(j\)
\[
    e^{t\cL^\dagger}\cD^\dagger(F_j)
    =
    e^{tA}\cD^\dagger(F_j),
\]
then all functions
\[
    \operatorname{Tr}[F_j\Phi_t(\omega)]
\]
are obtained by pairing the right-hand side of
\eqref{eq:heisenberg-code-pairing}.
\end{proposition}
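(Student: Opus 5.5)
The plan is to treat this as the standard Krylov/observability argument on the finite-dimensional operator space \(\cB(\cH_P)\), which has dimension \(4^n\). The argument has three steps: finiteness, invariance, and the identity \(e^{t\cL^\dagger}\cD^\dagger(F_j)=e^{tA}\cD^\dagger(F_j)\), which is then inserted into the pairing \eqref{eq:heisenberg-code-pairing}.

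\emph{Finiteness.} \(\cK_{\rm code}\) is a linear subspace of \(\cB(\cH_P)\), so \(\dim\cK_{\rm code}\le 4^n\). To make it constructive, I would use the Cayley--Hamilton theorem for \(\cL^\dagger\), viewed as a \(4^n\times4^n\) matrix. It expresses \((\cL^\dagger)^{m}\) for \(m\ge 4^n\) as a linear combination of lower powers. Hence \(\cK_{\rm code}\) is already spanned by the vectors \((\cL^\dagger)^m\cD^\dagger(F_j)\) with \(0\le m<4^n\). In fact the ascending chain \(\cK_m=\operatorname{span}\{(\cL^\dagger)^r\cD^\dagger(F_j):r\le m\}\) stabilizes as soon as \(\cK_{m+1}=\cK_m\).

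\emph{Invariance.} Take a generic element \(X=\sum c_{j,m}(\cL^\dagger)^m\cD^\dagger(F_j)\), a finite sum. Then \(\cL^\dagger X=\sum c_{j,m}(\cL^\dagger)^{m+1}\cD^\dagger(F_j)\), which again lies in \(\cK_{\rm code}\). So \(A=\cL^\dagger|_{\cK_{\rm code}}\) is a well-defined linear operator on the finite-dimensional space \(\cK_{\rm code}\).

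\emph{Exponential identity.} For each \(j\), the power series gives
\[
e^{t\cL^\dagger}\cD^\dagger(F_j)=\sum_{m\ge0}\frac{t^m}{m!}(\cL^\dagger)^m\cD^\dagger(F_j)=\sum_{m\ge0}\frac{t^m}{m!}A^m\cD^\dagger(F_j)=e^{tA}\cD^\dagger(F_j).
\]
The middle equality holds because, by invariance, \((\cL^\dagger)^mY=A^mY\) for every \(Y\in\cK_{\rm code}\). Both series converge absolutely in finite dimension. Alternatively, both sides solve \(\dot X=\cL^\dagger X\) with the same initial datum inside \(\cK_{\rm code}\), so uniqueness of solutions of linear ODEs gives the identity. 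Substituting into \eqref{eq:heisenberg-code-pairing} yields
\[
\operatorname{Tr}[F_j\Phi_t(\omega)]=\operatorname{Tr}[(e^{tA}\cD^\dagger(F_j))V\omega V^\dagger].
\]
Since \(\{F_j\}\) is a basis of \(\cB(\cH_L)\), these numbers determine \(\Phi_t(\omega)\) for every \(\omega\), and therefore determine \(\Phi_t\) completely from \(A\), the vectors \(\cD^\dagger(F_j)\), and \(V\).

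\emph{Where care is needed.} The only real subtlety is justifying \eqref{eq:heisenberg-code-pairing} itself. I would use the Hilbert--Schmidt adjoint: \(\operatorname{Tr}[F\,\cD(Y)]=\operatorname{Tr}[\cD^\dagger(F)Y]\) and \((e^{t\cL})^\dagger=e^{t\cL^\dagger}\), both immediate from the definitions in finite dimension. Minimality, in the sense of being the smallest \(\cL^\dagger\)-invariant space containing all \(\cD^\dagger(F_j)\), is also immediate: any invariant space containing these vectors contains all their iterates.
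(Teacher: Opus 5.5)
Your proposal is correct and follows the same route as the paper. $\cK_{\rm code}$ is a subspace of the finite-dimensional operator space and is $\cL^\dagger$-invariant by construction, so $e^{t\cL^\dagger}\cD^\dagger(F_j)=e^{tA}\cD^\dagger(F_j)$ can be computed inside it and substituted into the Heisenberg pairing; you additionally supply the details the paper leaves implicit, namely the Cayley--Hamilton bound on the Krylov steps, the power-series identity $(\cL^\dagger)^m Y = A^m Y$ for $Y\in\cK_{\rm code}$, and the Hilbert--Schmidt duality that justifies the pairing.
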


\begin{proof}
In finite dimension, the linear span of all operators
\[
    (\cL^\dagger)^m\cD^\dagger(F_j)
\]
is finite-dimensional. By construction it is invariant under
\(\cL^\dagger\). Therefore
\[
    e^{t\cL^\dagger}\cD^\dagger(F_j)
\]
can be computed inside \(\cK_{\rm code}\) using the matrix
\(A=\cL^\dagger|_{\cK_{\rm code}}\). Formula
\eqref{eq:heisenberg-code-pairing} shows that this suffices for all
matrix elements of the logical channel.
\end{proof}

\begin{theorem}[Finite algorithm for the visible reduction]
\label{thm:krylov-visible-algorithm}
Given a matrix representation of \(\cL^\dagger\), a code isometry \(V\),
and a recovery map \(\cD\), the code-visible reduction is constructed in a
finite number of linear-algebraic steps. First choose any basis of
logical observables \(F_1,\ldots,F_{d_L^2}\) and the initial space
\[
    W_0=\operatorname{span}\{\cD^\dagger(F_j)\colon
    j=1,\ldots,d_L^2\}.
\]
Then successively build the spaces
\[
    W_{r+1}=W_r+\cL^\dagger(W_r),
\]
stopping at the first \(r\) for which
\(W_{r+1}=W_r\), so that \(W_r=\cK_{\rm code}\), the stopping occurring
within at most \(d_P^2\) steps, where
\(d_P=\dim\cH_P=2^n\). Next, on \(\cK_{\rm code}\) take the matrix
\(A=\cL^\dagger|_{\cK_{\rm code}}\) and introduce the compression map onto
the code
\[
    \Gamma:\cK_{\rm code}\to\cB(\cH_L),
    \qquad
    \Gamma(X)=V^\dagger X V ,
\]
after which the maximal unobservable subspace equals
\begin{equation}
\label{eq:unobservable-subspace}
    \cN_{\rm obs}
    =
    \{X\in\cK_{\rm code}\colon
    \Gamma(A^mX)=0,\quad
    m=0,\ldots,\dim\cK_{\rm code}-1\}
\end{equation}
and is invariant under \(A\); finally, the minimal observable realization
of the logical channel is given by the quotient space
\[
    \widehat{\cK}_{\rm code}
    =
    \cK_{\rm code}/\cN_{\rm obs}
\]
together with the induced operator \(\widehat A\).

Two vectors in \(\cK_{\rm code}\) give the same contribution to all matrix
elements of \(\Phi_t\) for all \(t\ge0\) if and only if their difference
lies in \(\cN_{\rm obs}\).
\end{theorem}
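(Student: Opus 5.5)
The proof splits into three parts: termination of the Krylov iteration, the structure of \(\cN_{\rm obs}\), and the characterization of indistinguishable vectors. Everything takes place in the finite-dimensional space \(\cB(\cH_P)\), of dimension \(d_P^2\).

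\textbf{Termination of the Krylov iteration.} The spaces \(W_r\) form an increasing chain of subspaces of \(\cB(\cH_P)\). If \(W_{r+1}=W_r\), then \(\cL^\dagger(W_r)\subseteq W_r\). Induction then gives \(W_{s}=W_r\) for all \(s\ge r\). Since \(W_r\) is the span of \((\cL^\dagger)^m\cD^\dagger(F_j)\) for \(m\le r\), we get \(W_r=\cK_{\rm code}\). Before stabilization each step raises the dimension by at least one. The dimension starts at \(\dim W_0\ge 0\) and is bounded by \(d_P^2\), so stabilization occurs within \(d_P^2\) steps. I take \(A\) and its invariance from Proposition \ref{prop:visible-subspace}.

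\textbf{Structure of \(\cN_{\rm obs}\).} Set \(N=\dim\cK_{\rm code}\). By Cayley--Hamilton, \(A^N\) is a linear combination of \(I,A,\dots,A^{N-1}\). Hence the finitely many conditions \(\Gamma(A^mX)=0\), \(m<N\), imply \(\Gamma(A^mX)=0\) for every \(m\ge0\). Consequently \(\cN_{\rm obs}=\bigcap_{m\ge0}\ker(\Gamma A^m)\).

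This description shows that \(\cN_{\rm obs}\) is \(A\)-invariant: if \(X\in\cN_{\rm obs}\), then \(\Gamma(A^m AX)=\Gamma(A^{m+1}X)=0\) for all \(m\). It is also the maximal \(A\)-invariant subspace contained in \(\ker\Gamma\). Indeed, any invariant \(U\subseteq\ker\Gamma\) satisfies \(A^mU\subseteq U\subseteq\ker\Gamma\), so \(U\subseteq\cN_{\rm obs}\). Invariance makes the induced operator \(\widehat A\) on the quotient well defined.

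\textbf{Characterization of indistinguishable vectors.} Use \eqref{eq:heisenberg-code-pairing} and write \(V\omega V^\dagger\) in the pairing. For any \(X\in\cK_{\rm code}\) the relevant output is
\[
  \operatorname{Tr}[e^{tA}X\,V\omega V^\dagger]=\operatorname{Tr}[\Gamma(e^{tA}X)\,\omega].
\]
Since \(\omega\) ranges over states, whose span is all of \(\cB(\cH_L)\), two vectors \(X,Y\) give the same contribution for all \(t\ge0\) if and only if \(\Gamma(e^{tA}(X-Y))=0\) for all \(t\ge0\).

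If \(Z=X-Y\in\cN_{\rm obs}\), expand \(e^{tA}=\sum_m t^mA^m/m!\). Every term vanishes under \(\Gamma\), so the function is identically zero.

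Conversely, \(t\mapsto\Gamma(e^{tA}Z)\) is real-analytic. If it vanishes on \([0,\infty)\), all its derivatives at \(t=0\) vanish. These derivatives are exactly \(\Gamma(A^mZ)\), so \(Z\in\cN_{\rm obs}\).

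Minimality of \(\widehat{\cK}_{\rm code}\) is then the standard Kalman statement. The induced output map \(\widehat\Gamma\) is well defined on the quotient and the pair \((\widehat A,\widehat\Gamma)\) is observable, since its unobservable subspace is \(\cN_{\rm obs}/\cN_{\rm obs}=0\). The reduced pair also reproduces every matrix element of \(\Phi_t\).

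\textbf{Main obstacle.} The main delicacy is interpretive rather than technical. One must fix what "contribution to matrix elements" means: pairing against all encoded states \(V\omega V^\dagger\), which justifies replacing the trace by \(\Gamma\). One must also justify stopping the index \(m\) at \(N-1\), which the Cayley--Hamilton argument settles. Given these two points, the remaining steps are routine linear algebra.
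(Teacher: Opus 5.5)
Your proposal is correct and follows the paper's proof step for step: you stabilize the Krylov chain by a dimension count in $\cB(\cH_P)$, truncate the observability conditions at $m=\dim\cK_{\rm code}-1$ via Cayley--Hamilton, get invariance of $\cN_{\rm obs}$ by shifting the index, and identify $\Gamma(e^{tA}X)\equiv0$ with the vanishing of all derivatives at $t=0$. You also spell out points the paper leaves implicit (maximality of $\cN_{\rm obs}$ among $A$-invariant subspaces of $\ker\Gamma$, the identity $\operatorname{Tr}[XV\omega V^\dagger]=\operatorname{Tr}[\Gamma(X)\omega]$ together with the fact that states span $\cB(\cH_L)$, and well-definedness of $\widehat\Gamma$); for full Kalman minimality you could add that the quotient is reachable, since $\cK_{\rm code}$ is by construction the Krylov span of the $\cD^\dagger(F_j)$.
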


\begin{proof}
The first two points are an ordinary Krylov closure of the initial
observables with respect to \(\cL^\dagger\). Since the space of operators
on \(\cH_P\) has dimension \(d_P^2\), the sequence \(W_r\) stabilizes
within at most \(d_P^2\) strict extensions. The condition
\(W_{r+1}=W_r\) means invariance under \(\cL^\dagger\), and minimality
follows from the fact that any invariant space containing all
\(\cD^\dagger(F_j)\) must contain all their images
\((\cL^\dagger)^m\cD^\dagger(F_j)\).

An operator \(X\in\cK_{\rm code}\) has zero logical image if
\[
    \Gamma(e^{tA}X)=0
\]
for all \(t\). In finite dimension this is equivalent to the vanishing of
all derivatives at zero up to order \(\dim\cK_{\rm code}-1\), i.e., to
conditions \eqref{eq:unobservable-subspace}; higher powers of \(A\) are
expressed through the previous ones by the Cayley--Hamilton theorem. The
same formula gives the invariance of \(\cN_{\rm obs}\): for
\(X\in\cN_{\rm obs}\), the quantities \(\Gamma(A^mAX)\) vanish for all the
needed \(m\), and the last case again reduces to Cayley--Hamilton.
Therefore the quotient space is well defined and identifies exactly those
variables whose contribution to \(\Phi_t\) is identically zero.
\end{proof}

\begin{corollary}[Criterion for a constant logical channel]
\label{cor:logical-channel-constant}
The logical channel \(\Phi_t\) remains constant for all logical initial
states if and only if for all \(j\) and all \(\omega\)
\[
    \operatorname{Tr}
    \left[
        \left(e^{t\cL^\dagger}\cD^\dagger(F_j)-\cD^\dagger(F_j)\right)
        V\omega V^\dagger
    \right]
    =
    0 .
\]
A sufficient strong condition is
\[
    \cL^\dagger\cD^\dagger(F_j)=0,
    \qquad
    j=1,\ldots,d_L^2.
\]
\end{corollary}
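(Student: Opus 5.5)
The plan is to reduce everything to the Heisenberg pairing \eqref{eq:heisenberg-code-pairing}. By definition, \(\Phi_t\) is constant for all logical inputs iff \(\Phi_t(\omega)=\Phi_0(\omega)\) for all \(t\ge0\) and all \(\omega\). Since \(\{F_j\}\) is a basis of \(\cB(\cH_L)\), a logical operator \(Y\) vanishes iff \(\operatorname{Tr}[F_jY]=0\) for all \(j\). We apply this to \(Y=\Phi_t(\omega)-\Phi_0(\omega)\).

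\textbf{The equivalence.} Using \eqref{eq:heisenberg-code-pairing} at time \(t\) and at time \(0\), where \(e^{0\cdot\cL^\dagger}=\id\), we get
\[
\operatorname{Tr}[F_j(\Phi_t(\omega)-\Phi_0(\omega))]
=\operatorname{Tr}\bigl[(e^{t\cL^\dagger}\cD^\dagger(F_j)-\cD^\dagger(F_j))V\omega V^\dagger\bigr].
\]
Vanishing of the right-hand side for all \(j\) and \(\omega\) is therefore equivalent to \(\Phi_t(\omega)=\Phi_0(\omega)\) for all \(\omega\). This gives the stated criterion directly.

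Two bookkeeping remarks keep the argument clean.
\begin{itemize}
\item Testing only on states \(\omega\) suffices, because states span \(\cB(\cH_L)\) and \(\Phi_t\) is linear.
\item The condition is meant for every \(t\ge0\). If one prefers, it can be restated via Theorem \ref{thm:krylov-visible-algorithm}: the differences \(e^{tA}\cD^\dagger(F_j)-\cD^\dagger(F_j)\) must have zero compression \(\Gamma\).
\end{itemize}

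\textbf{The sufficient condition.} If \(\cL^\dagger\cD^\dagger(F_j)=0\), the exponential series truncates: \(e^{t\cL^\dagger}\cD^\dagger(F_j)=\sum_m t^m(\cL^\dagger)^m\cD^\dagger(F_j)/m!=\cD^\dagger(F_j)\). So the integrand difference is identically zero, and the criterion holds. Equivalently, \(\cK_{\rm code}=W_0\) and \(A=0\) in Proposition \ref{prop:visible-subspace}.

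To explain why this condition is only sufficient, not necessary, I would note that it is strictly stronger than the criterion. It ignores both the compression \(V^\dagger(\cdot)V\) and the quotient by \(\cN_{\rm obs}\). A one-line example would make this concrete: a generator acting only on syndrome coherences outside the code, so that \(\Gamma\) kills its action.

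There is no real obstacle. The only care needed is to state explicitly that the criterion quantifies over all \(t\ge0\), and to justify passing from states to all operators by linearity.
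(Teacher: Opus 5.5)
Your proposal is correct and follows the paper's argument. The paper gives no separate proof and treats the criterion as immediate from the Heisenberg pairing \eqref{eq:heisenberg-code-pairing} taken at times \(t\) and \(0\), with the sufficient condition coming from the truncation \(e^{t\cL^\dagger}\cD^\dagger(F_j)=\cD^\dagger(F_j)\); your added bookkeeping (quantifying over all \(t\ge0\), and extending from states to all of \(\cB(\cH_L)\) by linearity) is exactly the care the statement needs.
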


The latter condition is a strong sufficient criterion. A more general
mechanism for preserving the logical channel is the vanishing of the
change of the observables \(\cD^\dagger(F_j)\) on the encoded states after
the chosen recovery map.

\subsection{Decoder mismatch for coherent noise}

The code-visible space is determined by the code, the physical noise, and
the chosen recovery map. Different recovery maps can identify or
distinguish syndrome regions in which the physical dynamics carries phase
information. For stochastic Pauli noise this usually shows up as a
difference in the probabilities of logical residues. For coherent drift
the difference is stronger: a recovery map may average out the phase, turn
it into a logical rotation, or make it visible only conditionally on the
syndrome. We will call such a mismatch between the decoder and the actual
continuous noise decoder mismatch.

For a recovery map \(\cD\), denote by \(\cK(\cD)\) the space
\(\cK_{\rm code}\) built via \eqref{eq:k-code-definition}, and by
\(\widehat{\cK}(\cD)\) its observable factorization from Theorem
\ref{thm:krylov-visible-algorithm}. We say that \(\cD_2\) is a logical
post-processing of \(\cD_1\) if there exists a CPTP channel
\(\cM:\cB(\cH_L)\to\cB(\cH_L)\) such that
\[
    \cD_2=\cM\cD_1 .
\]

\begin{theorem}[Monotonicity of visible modes under post-processing]
\label{thm:recovery-comparison}
If \(\cD_2=\cM\cD_1\), then
\[
    \cK(\cD_2)\subseteq \cK(\cD_1).
\]
Consequently, the visible spectrum after logical post-processing is
contained in the visible reduction for \(\cD_1\). Post-processing
preserves, mixes, or hides the available modes. In particular, for any
finite family of admissible recovery maps there exists a recovery map
minimizing any quality functional \(J(\cD)\) with the property
\(J(\cM\cD)\le J(\cD)\) for a logical post-processing \(\cM\).
\end{theorem}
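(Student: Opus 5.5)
The plan is to reduce everything to the linearity of the adjoint map \(\cM^\dagger\) on \(\cB(\cH_L)\). Since \(\cD_2=\cM\cD_1\), taking Hilbert--Schmidt adjoints gives \(\cD_2^\dagger=\cD_1^\dagger\cM^\dagger\). For each basis element \(F_j\), the operator \(\cM^\dagger(F_j)\) lies in \(\cB(\cH_L)\). It therefore expands in the basis as \(\cM^\dagger(F_j)=\sum_i c_{ij}F_i\) with complex coefficients \(c_{ij}\). By linearity of \(\cD_1^\dagger\),
\[
    \cD_2^\dagger(F_j)=\sum_i c_{ij}\,\cD_1^\dagger(F_i)\in W_0(\cD_1).
\]
So the initial space \(W_0(\cD_2)\) of the Krylov construction of Theorem \ref{thm:krylov-visible-algorithm} is contained in \(W_0(\cD_1)\).

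Next I invoke the minimality part of Theorem \ref{thm:krylov-visible-algorithm}, or equivalently Proposition \ref{prop:visible-subspace}. The space \(\cK(\cD_1)\) is invariant under \(\cL^\dagger\) and contains every \(\cD_2^\dagger(F_j)\). Hence it contains all \((\cL^\dagger)^m\cD_2^\dagger(F_j)\), and so \(\cK(\cD_2)\subseteq\cK(\cD_1)\). This also means \(\cK(\cD_2)\) is an \(A_1\)-invariant subspace of \(\cK(\cD_1)\), where \(A_1=\cL^\dagger|_{\cK(\cD_1)}\), and \(A_2=A_1|_{\cK(\cD_2)}\). The spectral consequence follows: the eigenvalues of a restriction to an invariant subspace, and even its Jordan blocks, are among those of \(A_1\). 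So \(\operatorname{spec}A_2\subseteq\operatorname{spec}A_1\).

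For the observable factorizations I will argue through matrix elements rather than through the quotients directly. The quotient \(\widehat{\cK}(\cD_2)\) is obtained from an invariant subspace of \(\cK(\cD_1)\) by factoring out its own unobservable part. Its induced operator has spectrum inside \(\operatorname{spec}A_2\subseteq\operatorname{spec}A_1\). The point is that only modes present in \(\operatorname{spec}A_1\) can appear after post-processing. Some of them may be hidden, for example when \(\cM\) is a depolarizing map that kills the coefficients \(c_{ij}\) multiplying a given mode. Others may be mixed, since \(\cM^\dagger\) linearly combines the outputs. This is the precise sense of ``preserves, mixes, or hides''.

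The last claim, about minimizing \(J\) over a finite family, is where I expect the only real care to be needed. The obvious part is that a finite family \(\{\cD^{(1)},\dots,\cD^{(N)}\}\) has a minimizer of any real-valued \(J\) simply by finiteness. The monotonicity hypothesis \(J(\cM\cD)\le J(\cD)\) is meant to show the minimizer may be taken to be a post-processing of any chosen member. I would phrase it this way: if the family is closed under the post-processings considered, then the minimum is attained. Moreover, replacing a minimizer \(\cD^\ast\) by \(\cM\cD^\ast\) within the family does not increase \(J\). It keeps the visible space inside \(\cK(\cD^\ast)\) by the first part of the theorem. The main obstacle is interpretive rather than technical: stating exactly which family is admissible so that \(\cM\cD\) stays in it. I will make this explicit as a hypothesis and otherwise fall back on the trivial finite-minimum statement.
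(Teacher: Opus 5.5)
Your proposal is correct and follows the paper's proof. \(\cD_2^\dagger=\cD_1^\dagger\cM^\dagger\) puts each \(\cD_2^\dagger(F_j)\) in the span of the \(\cD_1^\dagger(F_i)\), the \(\cL^\dagger\)-invariance of \(\cK(\cD_1)\) then gives \(\cK(\cD_2)\subseteq\cK(\cD_1)\), and the last claim is just attainment of a minimum on a finite set. One side remark is false as stated: Jordan blocks of a restriction need not be Jordan blocks of \(A_1\) (restricting a single \(3\times3\) nilpotent block to \(\ker A_1^2\) gives a \(2\times2\) block), though only eigenvalue containment is needed; your quotient step can also be sharpened by noting \(\cN_{\rm obs}(\cD_2)=\cN_{\rm obs}(\cD_1)\cap\cK(\cD_2)\), so that \(\widehat{\cK}(\cD_2)\) embeds as an \(\widehat A_1\)-invariant subspace of \(\widehat{\cK}(\cD_1)\).
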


\begin{proof}
From \(\cD_2=\cM\cD_1\) it follows that
\[
    \cD_2^\dagger=\cD_1^\dagger\cM^\dagger .
\]
Since \(\cM^\dagger\) maps logical observables to logical observables,
each operator \(\cD_2^\dagger(F_j)\) lies in the linear span of the
operators \(\cD_1^\dagger(F_i)\). Applying all powers of \(\cL^\dagger\)
gives the inclusion \(\cK(\cD_2)\subseteq\cK(\cD_1)\). Passing to the
observable factorization can only further identify unobservable
directions. The last statement is finite-dimensional: on a finite family,
\(J\) attains its minimum.
\end{proof}

\begin{remark}
The partial order in the theorem is given by the logical post-processing
relation. For incomparable recovery maps, the visible spaces can also be
incomparable. Hence the choice of recovery map enters the physical problem
formulation, especially in the presence of coherent Hamiltonian noise
components.
\end{remark}

\subsection{Short time and the Knill--Laflamme conditions}

The short-time reduction must be consistent with the standard
correctability criterion. For \eqref{eq:lindblad}, the physical channel
has the expansion
\[
    e^{t\cL}(\rho)
    =
    K_0(t)\rho K_0(t)^\dagger
    +
    \sum_\alpha K_\alpha(t)\rho K_\alpha(t)^\dagger
    +
    O(t^2),
\]
where
\[
    K_0(t)
    =
    I-t\left(iH+\frac12\sum_\alpha L_\alpha^\dagger L_\alpha\right),
    \qquad
    K_\alpha(t)=\sqrt t\,L_\alpha .
\]
Consequently, the jump operators \(L_\alpha\) are first-order errors,
while \(K_0(t)\) contains the effective no-jump evolution.

\begin{proposition}[Consistency with first-order correctability]
\label{prop:first-order-correction}
Suppose that for the code projector \(P=VV^\dagger\) the conditions
\[
    P L_\alpha^\dagger L_\beta P=c_{\alpha\beta}P,
    \qquad
    P L_\alpha P=c_{\alpha 0}P
\]
hold for all \(\alpha,\beta\). Then there exists a recovery map \(\cD\)
such that the recovered logical channel has the form
\begin{equation}
\label{eq:first-order-logical}
    \Phi_t(\omega)
    =
    \omega
    -
    it[H_L,\omega]
    +
    O(t^2),
    \qquad
    H_L=V^\dagger P H P V .
\end{equation}
In particular, if \(H_L\) is proportional to \(I_L\) or is compensated by
a known logical frame, then
\[
    \Phi_t=\id_L+O(t^2).
\]
\end{proposition}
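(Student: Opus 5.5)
The plan is to build \(\cD\) as an explicit Knill--Laflamme recovery for the enlarged error set \(\{I, L_\alpha\}\) and then expand \(\Phi_t\) to first order in \(t\).

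\textbf{Step 1 (error set).} The stated conditions, together with \(PIP=P\), are exactly the conditions \eqref{eq:knill-laflamme} for the family \(\{E_0=I, E_\alpha=L_\alpha\}\). Here \(c_{00}=1\), \(c_{0\alpha}=c_{\alpha0}\) and \(c_{\alpha\beta}\) are as given; the entry \(c_{0\alpha}\) follows from \(PL_\alpha P=c_{\alpha0}P\) and its adjoint. The matrix \(c\) is Hermitian and positive semidefinite, so Proposition \ref{prop:kl-invisibility} supplies a recovery \(\cD\) with \(\cD(E_a V\omega V^\dagger E_b^\dagger)\) proportional to \(\omega\).

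I would make this explicit. Diagonalize \(c\) and pass to errors \(F_\mu=\sum_a u_{a\mu}E_a\) with \(PF_\mu^\dagger F_\nu P=d_\mu\delta_{\mu\nu}P\). By polar decomposition \(F_\mu P=\sqrt{d_\mu}\,U_\mu P\), where the \(U_\mu\) map the code space isometrically onto mutually orthogonal syndrome spaces \(P_\mu=U_\mu PU_\mu^\dagger\). Define
\[
\cD(\rho)=\sum_\mu V^\dagger U_\mu^\dagger P_\mu\rho P_\mu U_\mu V+\cD_{\perp}(\rho),
\]
where \(\cD_\perp\) is any CPTP completion on the complement of \(\bigoplus_\mu P_\mu\). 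Then for every operator \(X\) in the span of \(\{I,L_\alpha\}\) there is a scalar \(\lambda(X)\) with \(\cD(XV\omega V^\dagger Y^\dagger)=\lambda(X)\overline{\lambda(Y)}\,\omega\), where \(\lambda(X)=\sum_\mu\sqrt{d_\mu}\,(\text{coefficient of }U_\mu)\). More usefully, \(\cD(XV\omega V^\dagger Y^\dagger)=\operatorname{tr}\)-type scalar \(\times\,\omega\), with the scalar equal to \(\langle Y,X\rangle_P:=\) the coefficient \(c\) of \(PY^\dagger XP\). The main obstacle I expect is this identity: it must be checked that cross terms between different syndrome sectors vanish under the projective syndrome measurement, and that the scalar is exactly the Knill--Laflamme coefficient, i.e.\ \(V^\dagger Y^\dagger XV\) read as a multiple of \(I_L\). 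Verifying the latter is a short computation in the \(F_\mu\) basis.

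\textbf{Step 2 (expansion).} Write \(K_0(t)=I-tG\) with \(G=iH+\tfrac12\sum_\alpha L_\alpha^\dagger L_\alpha\). The operator \(G\) is not in the correctable span, so I split it as \(G=PGP+(G-PGP)\). Acting on \(V\omega V^\dagger\), the term \(K_0\rho K_0^\dagger\) equals \(\rho-t(G\rho+\rho G^\dagger)+O(t^2)\) with \(\rho=V\omega V^\dagger\), and \(G\rho=PGP\rho+(1-P)G\rho\). The projected part gives
\[
V^\dagger PGPV\,\omega=\bigl(iH_L+\tfrac12\textstyle\sum_\alpha c_{\alpha\alpha}I_L\bigr)\omega,
\]
because \(PL_\alpha^\dagger L_\alpha P=c_{\alpha\alpha}P\). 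The leaked part \((1-P)G\rho\) enters only through the cross term \(\rho\) versus \((1-P)G\rho\). Under \(\cD\), the identity component \(P_0\) (the sector containing \(P\)) is orthogonal to \((1-P)\) wherever \((1-P)G\rho\) lands outside the syndrome sectors. Where it lands inside a sector \(P_\mu\), it is killed by the off-diagonal pinching, since \(\rho\) lives in \(P\) and \(P_\mu P=0\) for \(\mu\) not the trivial sector. To keep this clean, I would choose the basis so that one sector is the code space itself. This is possible because \(PL_\alpha P\propto P\): subtracting \(c_{\alpha0}I\) from \(L_\alpha\) makes \(L_\alpha P\) orthogonal to the code. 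The recovery then maps \(X\rho\) to \(V^\dagger PXP V\omega\) on the trivial sector, and the first-order \(K_0\) contribution becomes \(\omega-t\{iH_L,\cdot\}\)-type terms, namely \(-t(iH_L\omega-i\omega H_L)-t\sum_\alpha c_{\alpha\alpha}\omega\).

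\textbf{Step 3 (jump terms and cancellation).} The jump terms \(t\,\cD(L_\alpha\rho L_\alpha^\dagger)\) contribute \(t\,c_{\alpha\alpha}\,\omega\) by Step 1. This cancels the anticommutator contribution exactly, which is also forced by trace preservation of \(\Phi_t\). What remains is \(\Phi_t(\omega)=\omega-it[H_L,\omega]+O(t^2)\), with the \(O(t^2)\) remainder uniform because all maps involved are finite-dimensional and bounded.

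For the final claim: if \(H_L=cI_L\), the commutator vanishes. If instead \(H_L\) is known, post-composing with the logical unitary \(e^{itH_L}\), a logical post-processing in the sense of Theorem \ref{thm:recovery-comparison}, removes it to first order, giving \(\Phi_t=\id_L+O(t^2)\).
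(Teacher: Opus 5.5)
Your proposal is correct and follows the same route as the paper's proof. The paper uses a Knill--Laflamme recovery for \(\{I,L_\alpha\}\); the jump terms then give \(t\,c_{\alpha\alpha}\omega\), which cancels the scalar \(PL_\alpha^\dagger L_\alpha P\) part of the no-jump term, and \(PHP\) leaves \(-it[H_L,\omega]\). You are more explicit than the paper about the leaked terms \((1-P)GP\rho\), which you kill by making the code space its own syndrome sector. That choice is sufficient but not needed: any \(\cD\) with \(\cD\cE=\id_L\) annihilates code--complement blocks, by trace preservation.

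Delete the rank-one formula \(\lambda(X)\overline{\lambda(Y)}\,\omega\). It is false, since it would make \(\cD(F_\mu\rho F_\nu^\dagger)\neq0\) for \(\mu\neq\nu\). The Knill--Laflamme-coefficient version you state right after it is correct, and it is all you use.
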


\begin{proof}
The conditions in the statement are the Knill--Laflamme conditions for the
set of errors \(\{I,L_\alpha\}\). Hence the jump branches
\(\sqrt t\,L_\alpha\) are correctable up to terms of order \(t^2\). The
anticommutator part \(K_0(t)\) contains
\(\sum_\alpha L_\alpha^\dagger L_\alpha\), which acts as a scalar on the
code by the same conditions and is logically neutral. The nonscalar first
order arises from the Hamiltonian \(PHP\). After identifying the code
subspace with \(\cH_L\), it gives the commutator in
\eqref{eq:first-order-logical}.
\end{proof}

This proposition is used below as a consistency check of the visible
reduction against the standard Knill--Laflamme criterion. The distinctive
role of the spectral language begins after the physical generator and
recovery map are chosen, when the same set of physical modes can produce
different logical channels.

\begin{remark}
This statement explains the familiar claim that a distance-\(d\) code
suppresses low-weight errors. In continuous time this means the
disappearance of the corresponding linear terms of logical decoherence.
Higher-order logical errors and oscillating modes are described by the
spectral reduction below.
\end{remark}

\section{Spectral signatures of coherent decoherence}
\label{sec:spectral-classification}

We now fix the code, the recovery map, and the physical generator. This
section uses the standard spectral analysis of the finite-dimensional
matrix
\[
    A=\cL^\dagger|_{\cK_{\rm code}}.
\]
The point of the classification is to apply this standard analysis only
to the observable realization of the logical channel. Under Hamiltonian
or coherently dissipative noise, complex visible eigenvalues are a
spectral signature of phase information that has reached the logical
level. The logical channel sees precisely those spectral modes that are
present in the decomposition of the vectors \(\cD^\dagger(F_j)\) and that
have nonzero pairing with the encoded states.

\begin{definition}
For an initial logical state \(\omega\), call an eigenvalue
\(\lambda\in\sigma(A)\) \emph{visible} if there exists a logical
observable \(F_j\) and a generalized eigencomponent \(X_{\lambda,m}\),
\(m=0,\ldots,\nu_\lambda-1\), where \(\nu_\lambda\) is the size of the
corresponding Jordan block, in the Jordan decomposition
\[
    \cD^\dagger(F_j)
    =
    \sum_{\lambda\in\sigma(A)}\sum_{m=0}^{\nu_\lambda-1}X_{\lambda,m}
\]
of \(\cD^\dagger(F_j)\) along generalized eigenspaces of \(A\), such that
for some such \(m\)
\[
    \operatorname{Tr}
    \left[
        X_{\lambda,m} V\omega V^\dagger
    \right]\neq0 .
\]
Denote the set of such eigenvalues by
\(\sigma_{\rm vis}(\omega)\).
\end{definition}

\begin{proposition}[Criterion for coherent logical oscillations]
\label{prop:general-code-oscillation}
Every matrix element of the logical channel \(\Phi_t\) is a finite linear
combination of functions of the form
\[
    t^m e^{\lambda t},
    \qquad
    \lambda\in\sigma(A),
\]
where \(m\) is smaller than the size of the corresponding Jordan block.
Sinusoidal components of the logical channel arise if and only if
\(\sigma_{\rm vis}(\omega)\) contains a complex eigenvalue \(a+ib\) with
\(b\neq0\). These components decay for \(a<0\) and keep a constant
amplitude for \(a=0\).
\end{proposition}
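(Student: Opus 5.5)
I would start from the pairing formula \eqref{eq:heisenberg-code-pairing} together with Proposition \ref{prop:visible-subspace}, which gives
\[
    \operatorname{Tr}[F_j\Phi_t(\omega)]
    =
    \operatorname{Tr}\left[e^{tA}\cD^\dagger(F_j)\,V\omega V^\dagger\right].
\]
Decompose \(\cD^\dagger(F_j)=\sum_\lambda X_\lambda\) along the generalized eigenspaces of \(A\). On each generalized eigenspace write \(A=\lambda+N_\lambda\) with \(N_\lambda\) nilpotent of index at most \(\nu_\lambda\). Then
\[
    e^{tA}X_\lambda=e^{\lambda t}\sum_{m=0}^{\nu_\lambda-1}\frac{t^m}{m!}N_\lambda^mX_\lambda .
\]
Pairing with the fixed operator \(V\omega V^\dagger\) is linear, so the matrix element equals \(\sum_\lambda e^{\lambda t}p_{\lambda}(t)\). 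Here \(p_\lambda\) is a polynomial of degree less than \(\nu_\lambda\) whose coefficients are \(\operatorname{Tr}[N_\lambda^mX_\lambda V\omega V^\dagger]/m!\). This proves the first claim.

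I would read the components \(X_{\lambda,m}\) in the definition of \(\sigma_{\rm vis}(\omega)\) as the vectors \(N_\lambda^mX_\lambda\), that is, the Jordan-chain components. Under this reading, \(\lambda\) is visible exactly when \(p_\lambda\not\equiv0\) for some \(j\). The equivalence then rests on the linear independence over \(\mathbb C\) of the functions \(t^me^{\lambda t}\) for distinct pairs \((\lambda,m)\), a standard fact proved via Wronskians or Laplace transforms. By this independence, the representation \(\sum_\lambda e^{\lambda t}p_\lambda(t)\) is unique. Hence a term \(e^{(a+ib)t}\) with \(b\neq0\) appears in some matrix element if and only if the corresponding \(p_\lambda\) is nonzero, i.e.\ iff \(a+ib\in\sigma_{\rm vis}(\omega)\).

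Next I would connect complex exponentials with real sinusoids. Since \(\cL\) is Hermiticity-preserving, so is \(\cL^\dagger\), and \(\cK_{\rm code}\) is closed under \(X\mapsto X^\dagger\), because \(\cD^\dagger(F_j)^\dagger=\cD^\dagger(F_j^\dagger)\). Choosing Hermitian \(F_j\) and Hermitian \(\omega\), the matrix elements are real. By uniqueness of the expansion, the terms for \(\lambda\) and \(\bar\lambda\) then come in conjugate pairs, giving \(e^{at}(q_1(t)\cos bt+q_2(t)\sin bt)\) with real polynomials \(q_1,q_2\) that do not both vanish. So a genuinely sinusoidal component, one not expressible through real exponentials and polynomials, exists iff some visible \(\lambda\) has \(b\neq0\).

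The decay and amplitude statements follow from \(|e^{\lambda t}|=e^{at}\). For \(a=0\), constant amplitude additionally needs \(\nu_\lambda=1\) for the visible part, or more precisely \(p_\lambda\) constant. I would secure this from complete positivity: \(\|e^{t\cL}\|\) is bounded, so purely imaginary eigenvalues of \(\cL^\dagger\), and hence of its restriction \(A\), are semisimple. I expect the main obstacle to be this last step together with the careful matching of the definition's \(X_{\lambda,m}\) to the Jordan-chain coefficients. The rest is routine linear algebra.
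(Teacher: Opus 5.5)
Your proposal is correct and follows the paper's route: reduce to \(e^{tA}\) via Proposition~\ref{prop:visible-subspace} and the pairing formula, expand by the Jordan form, and combine \(\lambda,\bar\lambda\) into \(e^{at}\cos bt\) and \(e^{at}\sin bt\). It is more complete than the paper's short proof, which omits three things you supply: the linear independence of the functions \(t^me^{\lambda t}\) (which guarantees that a visible complex eigenvalue really produces a sinusoidal term), the Hermiticity argument behind conjugate pairing, and the imaginary-axis semisimplicity needed for constant amplitude (established only in the proof of Theorem~\ref{thm:strict-visible-classification}); your reading \(X_{\lambda,m}=N_\lambda^mX_\lambda\) is also the one that makes the equivalence exact, since components taken in a fixed Jordan basis and paired with \(V\omega V^\dagger\) can miss terms such as \(t^2e^{\lambda t}\).
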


In Pauli models, after full phase averaging the logical channel is
usually described by real stochastic dynamics of probabilities. The
coherent part of the noise shows up differently: it can produce complex
visible modes, oscillations of logical fidelity, and revivals of memory
quality. In this sense the visible spectrum serves as a diagnostic of
which part of the continuous phase dynamics survived encoding, recovery,
and decoding.

\begin{proof}
By Proposition \ref{prop:visible-subspace}, all matrix elements of
\(\Phi_t\) are computed using \(e^{tA}\). The Jordan form of the
finite-dimensional matrix \(A\) gives an expansion of \(e^{tA}\) as a sum
of terms \(t^m e^{\lambda t}\). For non-real \(\lambda=a\pm ib\), real
linear combinations give factors \(e^{at}\cos bt\) and \(e^{at}\sin bt\).
If all visible \(\lambda\) are real, no such factors appear.
\end{proof}

\begin{theorem}[{Classification of visible spectral contributions}]
\label{thm:strict-visible-classification}
For a fixed initial logical state \(\omega\), the code-visible spectrum
\(\sigma_{\rm vis}(\omega)\) decomposes into three elementary types of
visible spectral contributions, which may occur simultaneously: an
\textbf{aperiodic contribution}, from a real
\(\lambda\in\sigma_{\rm vis}(\omega)\); a \textbf{damped oscillatory
contribution}, from a pair \(\lambda=a\pm ib\) with \(a<0\), \(b\neq0\);
and an \textbf{undamped oscillatory contribution}, from a pair
\(\lambda=\pm ib\) with \(b\neq0\). In general \(\sigma_{\rm
vis}(\omega)\) may contain several such pairs at once, so the logical
channel can display aperiodic relaxation together with damped and
undamped oscillations; the classification refers to which of the three
types are present, not to a single exclusive regime. Boundedness of the
CPTP semigroup excludes growing logical oscillations: if all nonzero
visible eigenvalues have negative real part, then any oscillatory
contribution decays, while visible eigenvalues exactly on the imaginary
axis produce an undamped contribution that persists asymptotically.
\end{theorem}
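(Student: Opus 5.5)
The plan is to derive the classification directly from Proposition \ref{prop:general-code-oscillation} together with two structural facts. First, the spectrum of \(A\) is closed under complex conjugation. Second, \(\operatorname{Re}\lambda\le0\) for every eigenvalue of \(A\), which follows from boundedness of the CPTP semigroup.

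The first step is conjugation symmetry. The adjoint generator \(\cL^\dagger\) is Hermiticity-preserving: \(\cL^\dagger(X^\dagger)=(\cL^\dagger X)^\dagger\). Since \(\cD^\dagger\) is also Hermiticity-preserving, we may choose the \(F_j\) to be Hermitian. Then \(\cK_{\rm code}\) is closed under \(X\mapsto X^\dagger\), and this antilinear involution commutes with \(A\). It therefore maps the generalized eigenspace for \(\lambda\) onto the one for \(\bar\lambda\), preserving Jordan block sizes. For Hermitian \(F_j\) and a state \(\omega\), the components satisfy \(X_{\bar\lambda,m}=X_{\lambda,m}^\dagger\), and
\[
\operatorname{Tr}[X_{\lambda,m}^\dagger V\omega V^\dagger]=\overline{\operatorname{Tr}[X_{\lambda,m}V\omega V^\dagger]}.
\]
Hence \(\sigma_{\rm vis}(\omega)\) is closed under conjugation. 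Non-real visible eigenvalues therefore come in visible pairs \(a\pm ib\), and the classification into ``real'' and ``pair'' types is well posed.

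The second step, which I expect to be the main point requiring care, is the bound \(\operatorname{Re}\lambda\le0\) on \(\sigma(A)\). I would argue as follows. The map \(e^{t\cL^\dagger}\) is unital and completely positive, so it is a contraction in operator norm: \(\|e^{t\cL^\dagger}X\|\le\|X\|\). If \(X\in\cK_{\rm code}\) is an eigenvector of \(A\) with \(\operatorname{Re}\lambda>0\), then \(\|e^{tA}X\|=e^{t\operatorname{Re}\lambda}\|X\|\to\infty\), which is a contradiction. The same bound shows that eigenvalues with \(\operatorname{Re}\lambda=0\) have trivial Jordan blocks on the whole invariant subspace. Otherwise \(te^{tA}\)-type growth of \(\|e^{tA}X\|\) would occur for some \(X\) in a generalized eigenspace.

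The third step sorts the visible spectrum using these facts. Every \(\lambda\in\sigma_{\rm vis}(\omega)\) is exactly one of three kinds: real, a pair \(a\pm ib\) with \(a<0,\ b\neq0\), or a pair \(\pm ib\) with \(b\neq0\). The case \(a>0\) has been excluded. By Proposition \ref{prop:general-code-oscillation}, each kind contributes, respectively, terms \(t^me^{\lambda t}\), terms \(t^me^{at}\cos bt\) and \(t^me^{at}\sin bt\), and terms \(\cos bt\) and \(\sin bt\) with no polynomial factor. Any finite combination of these can occur, which gives the ``not exclusive'' clause. For the final assertions, if all nonzero visible eigenvalues have \(\operatorname{Re}<0\), each oscillatory term is bounded by \(t^me^{at}\to0\). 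For an eigenvalue on the imaginary axis, the trivial Jordan block from the second step gives a bounded, nonvanishing almost-periodic term. This term persists as \(t\to\infty\), because by visibility its coefficient \(\operatorname{Tr}[X_{\lambda,0}V\omega V^\dagger]\) is nonzero and distinct frequencies cannot cancel asymptotically.
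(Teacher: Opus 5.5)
Your proposal is correct and follows the paper's proof: boundedness of the unital positive adjoint semigroup forces $\Re\lambda\le 0$ on $\sigma(A)$, with semisimple Jordan blocks on the imaginary axis, and the sorting then comes from Proposition~\ref{prop:general-code-oscillation}; your conjugation-closure of $\sigma_{\rm vis}(\omega)$ via the involution $X\mapsto X^\dagger$ and your non-cancellation argument for distinct imaginary-axis frequencies make explicit two points the paper leaves implicit. One aside is overstated: not every combination of the three types can occur, because $\cD^\dagger(I_L)=I$, $\cL^\dagger(I)=0$ and $\operatorname{Tr}[I\,V\omega V^\dagger]=1$ make $\lambda=0$ always visible, but the theorem only asserts that the types are not mutually exclusive, so nothing in your argument depends on this.
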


\begin{proof}
Since \(e^{t\cL}\) is a CPTP semigroup, the adjoint semigroup
\(e^{t\cL^\dagger}\) preserves the identity operator and positivity.
Hence it is bounded in operator norm. The restriction to the invariant
subspace \(\cK_{\rm code}\) is also bounded. In finite dimension this
implies
\[
    \Re\lambda\le0,
    \qquad
    \lambda\in\sigma(A),
\]
and the Jordan blocks on the imaginary axis are semisimple. This rules out
exponentially growing components. The classification follows from
Proposition \ref{prop:general-code-oscillation}.
\end{proof}

\subsection{Operational significance of the visible modes}

Let \(d_L=\dim\cH_L\). Then the average fidelity of the logical channel is
expressed through the entanglement fidelity:
\[
    F_{\rm av}(t)=\frac{d_L F_e(t)+1}{d_L+1}.
\]
Since \(F_e(t)\) is a linear function of the Choi matrix of \(\Phi_t\), it
depends only on the same code-visible modes. For a logical Pauli channel
\[
    \Phi_t(\omega)
    =
    \sum_{\ell} p_\ell(t)\,
    \overline P_\ell \omega \overline P_\ell^\dagger
\]
the entanglement fidelity equals the probability of a trivial logical
residue:
\[
    F_e(t)=p_I(t).
\]

Hence logical oscillations appear as fluctuations of the probability of
successful storage, of the average fidelity, and of the anisotropy of
transmission along different logical directions.

\section{Stabilizer syndrome reduction}
\label{sec:stabilizer-reduction}

An important class of models arises when the physical Lindblad-type noise
is Pauli noise:

\begin{equation}
\label{eq:pauli-lindblad}
    \cL_P(\rho)
    =
    \sum_{a\in A}
    \gamma_a
    \left(
        P_a\rho P_a-\rho
    \right),
    \qquad
    \gamma_a\ge0,
\end{equation}
where \(P_a\in\cP_n\). A Pauli channel is a quantum channel that is a
stochastic mixture of Pauli operators. For a single qubit it has the form
\[
    \cE(\rho)
    =
      p_I\rho+p_X\,X\rho X+p_Y\,Y\rho Y+p_Z\,Z\rho Z,
    \qquad
    p_I+p_X+p_Y+p_Z=1,
\]
where \(p_I,p_X,p_Y,p_Z\ge0\). For \(n\) qubits, analogously,
\[
    \cE(\rho)
    =
    \sum_{P\in\cP_n}p_P\,P\rho P^\dagger,
    \qquad
    \sum_{P\in\cP_n}p_P=1.
\]
Equivalently, with probability \(p_P\) a particular Pauli string \(P\) is
applied to the state. Such a generator preserves the class of Pauli
channels. Preservation of the class of Pauli channels means that the
channel \(e^{t\cL_P}\), generated by this generator, has for every \(t\)
the form of a stochastic mixture of Pauli errors. Under such dynamics,
only the probabilities \(p_P(t)\) of individual Pauli labels change; a
commutator term of the form \(-i[H,\rho]\), corresponding to a coherent
rotation, does not arise. If at the initial moment the Pauli label of the
error equals the identity operator \(I\), i.e., there is no physical Pauli
error before the noise acts, then at time \(t\) the physical noise can be
represented as a random Pauli operator \(G\) with distribution \(q_G(t)\).

The distribution \(q(t)\) satisfies a finite Markov chain on the group of
Pauli labels without phases:
\begin{equation}
\label{eq:pauli-random-walk}
    \dot q_G(t)
    =
    \sum_{a\in A}
    \gamma_a
    \bigl(
        q_{P_aG}(t)-q_G(t)
    \bigr).
\end{equation}
After stabilizer recovery, the physical label \(G\) is replaced by the
logical residue \(\ell(G)\) defined in
\eqref{eq:logical-residue}. Hence the logical probabilities are
\begin{equation}
\label{eq:logical-probabilities-from-pauli}
    p_\ell(t)
    =
    \sum_{G:\ell(G)=\ell} q_G(t).
\end{equation}

\begin{theorem}[Reduction of Pauli-Lindblad noise to a logical Pauli channel]
\label{thm:pauli-chain-reduction}
Let a stabilizer code, a fixed Pauli recovery map \(R_s\), and a
Pauli-Lindblad generator \eqref{eq:pauli-lindblad} be given. Then the
effective logical channel has the form
\begin{equation}
\label{eq:logical-pauli-channel}
    \Phi_t(\omega)
    =
    \sum_{\ell\in\cN(\cS)/\cS}
    p_\ell(t)\,
    \overline P_\ell \omega \overline P_\ell^\dagger ,
\end{equation}
where \(p_\ell(t)\) is given by \eqref{eq:logical-probabilities-from-pauli}.
If the partition of Pauli labels into classes \(\ell(G)\) is strongly
lumpable for the chain \eqref{eq:pauli-random-walk}, then the vector
\(p(t)=(p_\ell(t))\) satisfies the autonomous equation
\[
    \dot p(t)=Q_{\rm log}p(t)
\]
with the generator of a finite Markov chain on the logical Pauli classes.
\end{theorem}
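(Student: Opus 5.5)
The plan has three steps: write the physical semigroup as a random Pauli operator, push that through syndrome measurement and correction, and then lump the resulting Markov chain.

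\emph{Step 1: the physical channel is a Pauli mixture.} The superoperators \(\cP_a(\rho)=P_a\rho P_a\) commute pairwise, since Pauli operators commute up to a sign and conjugation removes the sign. Each satisfies \(\cP_a^2=\id\). Hence
\[
e^{t\cL_P}=\prod_a e^{t\gamma_a(\cP_a-\id)},
\qquad
e^{t\gamma_a(\cP_a-\id)}=\tfrac{1+e^{-2\gamma_at}}{2}\,\id+\tfrac{1-e^{-2\gamma_at}}{2}\,\cP_a .
\]
Expanding the product gives \(e^{t\cL_P}(\rho)=\sum_G q_G(t)\,G\rho G^\dagger\), a sum over phaseless Pauli labels \(G\) with \(q_G(t)\ge0\) and \(\sum_G q_G(t)=1\). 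To identify \(q(t)\) with the solution of \eqref{eq:pauli-random-walk}, I would differentiate: \(\cL_P\) applied to \(G\rho G^\dagger\) produces \(\gamma_a\bigl(P_aG\rho G^\dagger P_a - G\rho G^\dagger\bigr)\). Re-indexing \(G\mapsto P_aG\) and using \(P_a^2=I\) up to phase yields exactly \eqref{eq:pauli-random-walk} with initial condition \(q_G(0)=\delta_{G,I}\). The labels \(G\) should be treated as elements of \(\cP_n/\{\pm1,\pm i\}\), so the chain is a random walk on the group \(\mathbb F_2^{2n}\).

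\emph{Step 2: apply the recovery to each Pauli branch.} Take \(\cD(\rho)=\sum_s V^\dagger R_s\Pi_s\rho\Pi_sR_s^\dagger V\), where \(\Pi_s\) is the syndrome projector. By linearity it suffices to evaluate \(\cD\) on a single branch \(GV\omega V^\dagger G^\dagger\). Because \(G\) maps the code space into the eigenspace with syndrome \(s(G)\), only \(\Pi_{s(G)}\) survives the measurement. The correction then leaves \(R_{s(G)}G\,V\omega V^\dagger\,(R_{s(G)}G)^\dagger\). The operator \(R_{s(G)}G\) lies in \(\cN(\cS)\), so it equals \(\overline P_{\ell(G)}S\) up to phase for some \(S\in\cS\). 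Since \(S\) acts as the identity on the code space, \(V^\dagger R_{s(G)}GV=\overline P_{\ell(G)}\) up to phase, and the phase cancels in the conjugation. Collecting the branches by their class \(\ell(G)\) gives \eqref{eq:logical-pauli-channel} with weights \eqref{eq:logical-probabilities-from-pauli}.

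\emph{Step 3: lumpability.} Strong lumpability of the partition \(\{\ell^{-1}(\ell)\}\) means that for each \(G\) in a class \(\ell\) and each target class \(\ell'\), the aggregated rate \(\sum_{a:\ell(P_aG)=\ell'}\gamma_a\) depends only on \(\ell\) and \(\ell'\) and not on \(G\). Calling this common rate \((Q_{\rm log})_{\ell'\ell}\) for \(\ell'\neq\ell\), and setting the diagonal entries so that columns sum to zero, I would sum \eqref{eq:pauli-random-walk} over \(G\in\ell^{-1}(\ell)\). Grouping the transitions by source class then yields \(\dot p=Q_{\rm log}\,p\). Off-diagonal entries are nonnegative and columns sum to zero, so \(Q_{\rm log}\) is a Markov generator. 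This is the standard Kemeny--Snell argument.

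The main obstacle is bookkeeping rather than analysis. The Pauli phases must be handled consistently, both when showing the chain lives on the phaseless group and when showing the class representative \(\overline P_\ell\) is well defined up to phase. One must also state the lumpability condition in the correct direction for the master equation as written, namely \(q_{P_aG}\) flowing into \(G\), where the abelian group structure makes incoming and outgoing transitions symmetric.
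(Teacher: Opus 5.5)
Your proposal is correct and follows the same route as the paper: the Pauli random walk for \(e^{t\cL_P}\), reduction of each branch \(G\) to its logical residue \(\ell(G)\) after recovery, then standard strong lumpability. You fill in what the paper leaves implicit, namely the explicit product formula for the semigroup, the Kraus form of \(\cD\), and the column-sum construction of \(Q_{\rm log}\). The only point to state explicitly is the identification of \(q_G(t)\) with the solution of \eqref{eq:pauli-random-walk}. This needs either linear independence of the maps \(\rho\mapsto G\rho G^\dagger\) over phaseless labels, or, more simply, the observation that \(\sum_G q_G(t)\,G(\cdot)G^\dagger\) solves the same superoperator ODE with the same initial condition as \(e^{t\cL_P}\).
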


\begin{proof}
The Pauli-Lindblad generator \eqref{eq:pauli-lindblad} produces a random
multiplication of the current Pauli error by one of the operators
\(P_a\), giving \eqref{eq:pauli-random-walk}. After syndrome measurement
and correction \(R_s\), the residual operator lies in the normalizer of
the stabilizer and acts on the code as a logical Pauli operator
\(\overline P_\ell\). Summing all physical labels with the same residue
gives \eqref{eq:logical-pauli-channel}. If the partition is lumpable, then
standard finite Markov chain theory gives a closed equation for
\(p_\ell(t)\).
\end{proof}

\begin{remark}
In general, the logical channel is given by
\eqref{eq:logical-probabilities-from-pauli}, and computing \(p(t)\)
requires a finer distribution over physical Pauli labels or over
syndrome-logical classes. This is a direct analogue of the appearance of
hidden coherences in reduced projections.
\end{remark}

\begin{corollary}[Detailed balance excludes classical oscillations]
\label{cor:detailed-balance-no-oscillations}
If the closed generator \(Q_{\rm log}\) satisfies detailed balance with
respect to a positive stationary distribution \(\pi\),
\[
    \pi_\ell (Q_{\rm log})_{m\ell}
    =
    \pi_m (Q_{\rm log})_{\ell m},
\]
then the spectrum of \(Q_{\rm log}\) is real. Consequently, the logical
probabilities \(p_\ell(t)\) contain no damped sinusoidal components.
\end{corollary}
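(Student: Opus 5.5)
The plan is the standard symmetrization argument for reversible Markov generators. Write \(D=\operatorname{diag}(\pi_\ell)\). Since \(\pi\) is strictly positive, both \(D^{1/2}\) and \(D^{-1/2}\) are well defined, and I would consider the similar matrix
\[
    S=D^{-1/2}Q_{\rm log}D^{1/2},
    \qquad
    S_{m\ell}=\pi_m^{-1/2}(Q_{\rm log})_{m\ell}\,\pi_\ell^{1/2}.
\]
The first step is to show that \(S\) is real symmetric. Multiply the detailed-balance identity \(\pi_\ell(Q_{\rm log})_{m\ell}=\pi_m(Q_{\rm log})_{\ell m}\) by \((\pi_\ell\pi_m)^{-1/2}\). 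This gives
\[
    \pi_\ell^{1/2}\pi_m^{-1/2}(Q_{\rm log})_{m\ell}=\pi_m^{1/2}\pi_\ell^{-1/2}(Q_{\rm log})_{\ell m},
\]
that is, \(S_{m\ell}=S_{\ell m}\). The entries of \(Q_{\rm log}\) are real rates, so \(S\) is real.

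The second step uses two standard facts. A real symmetric matrix is orthogonally diagonalizable with real eigenvalues, and similarity preserves the spectrum. Hence \(\sigma(Q_{\rm log})=\sigma(S)\subset\mathbb R\), and \(Q_{\rm log}\) is diagonalizable. The third step is then to write
\[
    p(t)=e^{tQ_{\rm log}}p(0)=D^{1/2}e^{tS}D^{-1/2}p(0)=\sum_k c_k e^{\mu_k t}v_k,
\]
with real \(\mu_k\) and real eigenvectors \(v_k\). This uses the autonomous equation \(\dot p=Q_{\rm log}p\) from Theorem \ref{thm:pauli-chain-reduction}, which is available because \(Q_{\rm log}\) is assumed closed. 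Every \(p_\ell(t)\) is therefore a finite real combination of pure exponentials \(e^{\mu_k t}\), with \(\mu_k\le0\) because the semigroup is stochastic. As in the proof of Proposition \ref{prop:general-code-oscillation}, factors \(e^{at}\cos bt\) and \(e^{at}\sin bt\) with \(b\neq0\) arise only from non-real eigenvalues, so none occur. Diagonalizability also excludes polynomial prefactors \(t^m\).

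I expect no serious obstacle. The only point needing care is a convention issue: the paper writes \(\dot p=Q_{\rm log}p\), so \(Q_{\rm log}\) acts on column vectors and \((Q_{\rm log})_{m\ell}\) is the rate from \(\ell\) to \(m\). I would check that the stated balance condition \(\pi_\ell(Q_{\rm log})_{m\ell}=\pi_m(Q_{\rm log})_{\ell m}\) is exactly the index pattern that makes \(D^{-1/2}QD^{1/2}\) symmetric, rather than \(D^{1/2}QD^{-1/2}\). With the other convention one simply swaps the conjugation. The positivity of \(\pi\) is used only to invert \(D^{1/2}\).
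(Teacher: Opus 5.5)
Your proposal is correct and is essentially the paper's argument: the paper observes that detailed balance makes \(Q_{\rm log}\) self-adjoint with respect to \(\langle x,y\rangle_\pi=\sum_\ell \pi_\ell^{-1}\overline{x_\ell}y_\ell\), which is equivalent to your statement that \(D^{-1/2}Q_{\rm log}D^{1/2}\) is real symmetric, since both amount to \(D^{-1}Q_{\rm log}\) being symmetric. Your index check for the column convention is right. Your extra remarks (diagonalizability ruling out \(t^m\) prefactors, and \(\mu_k\le0\)) fill in details the paper leaves implicit.
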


\begin{proof}
Under detailed balance, the matrix \(Q_{\rm log}\) is self-adjoint with
respect to the inner product
\[
    \langle x,y\rangle_\pi=\sum_\ell \pi_\ell^{-1}\overline{x_\ell}y_\ell .
\]
Hence it is diagonalizable with real spectrum. Then only real exponentials
appear in the solutions of \(\dot p=Q_{\rm log}p\).
\end{proof}

\begin{remark}[Toward topological codes]
The same procedure applies to the surface code and other stabilizer
families: one needs to specify \(\cL^\dagger\), the set of decodable
logical observables, and the specific decoder/recovery map. The exact
dimension of \(\cK_{\rm code}\) depends on the noise and on the chosen
decoder; for local Pauli-Lindblad models it is often reduced by the
syndrome structure and locality, whereas coherent non-Pauli terms can
enlarge the Krylov closure via syndrome coherences. The method thus does
not give a ready-made threshold formula, but rather an algorithmic way to
determine which continuous modes of the physical noise are visible in the
logical channel of a specific decoder.

Concretely, for the surface code the required input data are: a local
Pauli-Lindblad or Hamiltonian generator \(\cL\) supported on the qubits
of the lattice; a syndrome decoder such as minimum-weight perfect
matching or union-find, which fixes the recovery map \(\cD\); and the
pair of logical observables \(\overline X,\overline Z\) given by the
standard homological string representatives, which fixes the initial
space \(W_0\) in \eqref{eq:k-code-definition}. Building the Krylov
closure \(\cK_{\rm code}\) of \(\{\cD^\dagger(\overline
X),\cD^\dagger(\overline Z)\}\) under \(\cL^\dagger\) would then single
out exactly the decoder-visible syndrome coherences for that particular
decoder, without by itself producing a threshold estimate.
\end{remark}

\section{Effect of Pauli noise and coherent drift}
\label{sec:examples}

\subsection{Three-qubit repetition code against bit-flip noise}

Consider the code
\[
    \ket{0_L}=\ket{000},
    \qquad
    \ket{1_L}=\ket{111},
\]
correcting a single error \(X_i\). Suppose each physical qubit is subject
to independent bit-flip Lindblad noise
\[
    \cL_X(\rho)
    =
    \gamma\sum_{i=1}^{3}
    (X_i\rho X_i-\rho).
\]
For a single qubit, the probability of a flip error over time \(t\) is
\[
    p(t)=\frac{1-e^{-2\gamma t}}2.
\]
After a single majority-vote recovery, a logical error occurs if and only
if two or three physical qubits have flipped:
\begin{equation}
\label{eq:repetition-logical-error}
    p_L(t)
    =
    3p(t)^2(1-p(t))+p(t)^3
    =
    \frac12-\frac34e^{-2\gamma t}
    +\frac14e^{-6\gamma t}.
\end{equation}
Consequently,
\[
    p_L(t)=3\gamma^2t^2+O(t^3).
\]
The linear term vanishes: single errors are correctable. The effective
logical channel is
\[
    \Phi_t(\omega)
    =
    (1-p_L(t))\omega
    +
    p_L(t)\,\overline X\omega \overline X .
\]
In this example the spectrum of the code-visible reduction is real, so
the logical channel relaxes aperiodically.

\begin{remark}
If recovery is repeated at intervals \(\Delta t\), the probability of a
logical error per cycle is
\[
    p_L(\Delta t)=3\gamma^2\Delta t^2+O(\Delta t^3).
\]
Over a fixed time \(T\), after \(T/\Delta t\) cycles the leading
contribution scales as \(3\gamma^2T\Delta t\). This is a simple dynamical
form of first-order suppression.
\end{remark}

\subsection{Steane code: coherent drift on seven qubits}

Consider the \([[7,1,3]]\) Steane code \cite{Steane1996} and only the
\(X\)-part of its recovery. The syndrome of a single \(X_j\) error is
given by the nonzero column \(h_j\in\mathbb F_2^3\) of the Hamming
matrix; the seven columns \(h_j\) run through all nonzero elements of
\(\mathbb F_2^3\). For a subset \(E\subset\{1,\ldots,7\}\), denote
\[
    \sigma(E)=\bigoplus_{j\in E} h_j .
\]
The standard minimum-weight decoder corrects a syndrome \(\sigma\ne0\)
with the operator \(X_\sigma\), i.e., \(X_j\) with \(h_j=\sigma\), and
leaves a zero syndrome uncorrected.

Suppose that over one cycle the physical noise is a coherent \(X\)-drift
\[
    U_\theta
    =
    \exp\left[
        -\frac{i\theta}{2}\sum_{j=1}^7 X_j
    \right]
    =
    \prod_{j=1}^7(cI-isX_j),
    \qquad
    c=\cos\frac{\theta}{2},\quad
    s=\sin\frac{\theta}{2}.
\]
In the language of Section \ref{sec:lindblad-code}, this one-cycle unitary
is generated by the Hamiltonian drift
\(H_\theta=\frac{\theta}{2\tau}\sum_{j=1}^7 X_j\) over a cycle of duration
\(\tau\), so that \(U_\theta=e^{\tau\cL_\theta}\) with
\(\cL_\theta(\rho)=-i[H_\theta,\rho]\), and
\(\Phi_{\theta,{\rm St}}^{\rm coh}=\cD e^{\tau\cL_\theta}\cE\) is again an
instance of the effective logical channel \(\Phi_t\). The Kraus operators
computed below are the syndrome components of the pairing
\(\cD^\dagger(F_j)\) from \eqref{eq:heisenberg-code-pairing}, expanded
order by order in \(\theta\) rather than in continuous time; each power of
\(\theta\) corresponds to one step of the Krylov closure
\eqref{eq:k-code-definition} building \(\cK_{\rm code}\).

After syndrome measurement and correction, the Kraus operator of the
logical channel for syndrome \(\sigma\) is obtained by summing over all
physical error subsets with the given syndrome. The leading terms are
\[
    K_0
    =
    c^7 I_L+7is^3c^4\,\overline X+O(s^4),
\]
since the zero syndrome is shared by the seven weight-three words of the
classical Hamming code, giving the logical \(\overline X\). For each
nonzero syndrome \(\sigma\), minimum-weight correction gives
\[
    K_\sigma
    =
    -isc^6 I_L-3s^2c^5\,\overline X+O(s^3).
\]
The three terms in the coefficient of \(\overline X\) correspond to the
three pairs of qubits \(i,j\) for which \(h_i\oplus h_j=\sigma\); after
correction, \(X_\sigma X_iX_j\) is a weight-three representative of the
logical \(\overline X\).

Summing the contributions of all syndromes, we obtain for the coherent
channel
\begin{equation}
\label{eq:steane-coherent-leading}
    \Phi_{\theta,{\rm St}}^{\rm coh}(\omega)
    =
    \Phi_{\theta,{\rm St}}^{\rm twirl}(\omega)
    -14i\,s^3c^{11}[\overline X,\omega]
    +O(s^4).
\end{equation}
Hence, for small angle,
\[
    \Phi_{\theta,{\rm St}}^{\rm coh}(\omega)
    =
    \Phi_{\theta,{\rm St}}^{\rm twirl}(\omega)
    -i\frac{7}{4}\theta^3[\overline X,\omega]
    +O(\theta^4).
\]
If the same physical rotation is instead replaced by a Pauli-twirled
channel with probability \(p=s^2\) on each qubit, the minimum-weight
decoder gives a leading stochastic logical error rate
\[
    p_{L,{\rm St}}^{\rm twirl}
    =
    21s^4+O(s^6)
    =
    \frac{21}{16}\theta^4+O(\theta^6),
\]
since each of the \(21\) pairs of physical \(X\)-errors is corrected into
a weight-three logical \(\overline X\). Thus, already for the seven-qubit
stabilizer code, coherent drift produces a logical Hamiltonian rotation of
lower order than the stochastic logical-error probability in the
Pauli-twirled model. The visible spectral reduction captures precisely
this commutator contribution, whereas the reduction to Pauli-error
probabilities discards it.

The commutator term \(-14is^3c^{11}[\overline X,\omega]\) in
\eqref{eq:steane-coherent-leading} is exactly a code-visible
Hamiltonian-type contribution to \(A=\cL_\theta^\dagger|_{\cK_{\rm code}}\)
in the sense of Section \ref{sec:spectral-classification}: it is the
lowest Krylov order at which \(\cD^\dagger(\overline X)\) acquires a
nonzero antisymmetric (purely imaginary) component, i.e., a visible
complex-conjugate pair of eigenvalues of \(A\) rather than a real,
aperiodic one. This is the same spectral signature classified in
Proposition \ref{prop:general-code-oscillation} and Theorem
\ref{thm:strict-visible-classification}, here evaluated order by order in
\(\theta\); it is absent from the Pauli-twirled reduction of Section
\ref{sec:stabilizer-reduction}, whose generator \(Q_{\rm log}\) only
produces real relaxation rates.

\subsection{Two-qubit decoherence-free subspace}

Consider the code
\[
    \ket{0_L}=\ket{01},
    \qquad
    \ket{1_L}=\ket{10}.
\]
Under collective dephasing with operator
\[
    J_z=Z_1+Z_2
\]
the generator has the form
\[
    \cL_{\rm col}(\rho)
    =
    \gamma
    \left(
        J_z\rho J_z
        -
        \frac12\{J_z^2,\rho\}
    \right).
\]
On both code vectors, \(J_z\) has the same eigenvalue zero. Hence
\[
    \cL_{\rm col}(V\omega V^\dagger)=0
\]
for all \(\omega\). The code-visible space contains only the zero
generator, and the logical channel is the identity:
\[
    \Phi_t=\id_L .
\]

For independent dephasing
\[
    \cL_{\rm ind}(\rho)
    =
    \gamma\sum_{i=1}^{2}
    (Z_i\rho Z_i-\rho),
\]
on the code subspace both operators \(Z_i\) act as the logical
\(\overline Z\) up to sign. Hence
\[
    \Phi_t(\omega)
    =
    \frac{1+e^{-4\gamma t}}2\,\omega
    +
    \frac{1-e^{-4\gamma t}}2\,
    \overline Z\omega\overline Z .
\]
Collective symmetry thus singles out the DFS regime. Under independent
dephasing, an ordinary logical dephasing channel arises.

\subsection{A single physical drift and decoder mismatch}

The next example shows the dependence of the visible modes on the
recovery map. One and the same physical Liouvillian with undamped
oscillating modes gives an identity logical channel for one recovery map
and an oscillating logical channel for another. In decoder language this
is a minimal model of decoder mismatch: the physical drift occurs in a
syndrome degree of freedom, and the chosen logical frame decides whether
this drift turns into a logical error.

\begin{proposition}[One drift, two visible pictures]
\label{prop:same-drift-two-recoveries}
Let the physical space decompose as
\[
    \cH_P=\mathbb C^2_{\rm syn}\otimes\cH_L,
\]
with the encoding given by embedding into the zero syndrome:
\[
    V\ket{\psi}=\ket0_{\rm syn}\otimes\ket{\psi}.
\]
Consider the Hamiltonian physical dynamics
\[
    \cL_{\rm leak}(\rho)=-i[H_{\rm leak},\rho],
    \qquad
    H_{\rm leak}
    =
    \Omega
    (\,\ket0\bra1+\ket1\bra0\,)_{\rm syn}\otimes I_L .
\]
Then the full physical Liouvillian has purely imaginary eigenvalues. For
the recovery map
\[
    \cD_+(\rho)
    =
    \sum_{s=0}^{1}
    (\bra s\otimes I_L)\rho(\ket s\otimes I_L)
\]
the effective logical channel is the identity:
\[
    \Phi_t^{(+)}=\id_L .
\]
For another recovery map
\[
    \cD_Z(\rho)
    =
    (\bra0\otimes I_L)\rho(\ket0\otimes I_L)
    +
    Z_L(\bra1\otimes I_L)\rho(\ket1\otimes I_L)Z_L
\]
one obtains the logical Pauli channel
\[
    \Phi_t^{(Z)}(\omega)
    =
    \cos^2(\Omega t)\,\omega
    +
    \sin^2(\Omega t)\,Z_L\omega Z_L .
\]
Consequently, the same physical mode has a zero logical image for
\(\cD_+\) and a nonzero oscillating image for \(\cD_Z\), with frequency
\(2\Omega\).
\end{proposition}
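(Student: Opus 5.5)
The plan is a direct computation in which the factorization \(\cH_P=\mathbb C^2_{\rm syn}\otimes\cH_L\) reduces everything to a single qubit. Write \(\sigma_x=(\ket0\bra1+\ket1\bra0)_{\rm syn}\), so that \(H_{\rm leak}=\Omega\,\sigma_x\otimes I_L\).

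\textbf{Step 1: spectrum of the physical Liouvillian.} First I establish the spectral claim. Since \(H_{\rm leak}\) is Hermitian with eigenvalues \(\pm\Omega\), each of multiplicity \(d_L\), the operator \(\cL_{\rm leak}=-i[H_{\rm leak},\cdot\,]\) is diagonalized by the outer products of eigenvectors. Its eigenvalues are \(-i(E_a-E_b)\in\{0,\pm 2i\Omega\}\), all purely imaginary.

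\textbf{Step 2: the evolved physical state.} Next I compute \(e^{t\cL_{\rm leak}}(V\omega V^\dagger)\) explicitly. The unitary is \(U_t=e^{-itH_{\rm leak}}=(\cos\Omega t\,I-i\sin\Omega t\,\sigma_x)\otimes I_L\), so \(U_t\ket0=\cos\Omega t\ket0-i\sin\Omega t\ket1\). Hence
\[
    \rho(t)=\bigl(\cos\Omega t\ket0-i\sin\Omega t\ket1\bigr)\bigl(\cos\Omega t\bra0+i\sin\Omega t\bra1\bigr)\otimes\omega .
\]
Both recovery maps read only the diagonal syndrome blocks. The relevant blocks are
\[
    (\bra s\otimes I_L)\rho(t)(\ket s\otimes I_L)=\cos^2\Omega t\,\omega \quad (s=0),
    \qquad
    \sin^2\Omega t\,\omega \quad (s=1),
\]
and the off-diagonal syndrome coherences are discarded.

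\textbf{Step 3: apply the two recovery maps.}
\begin{itemize}
\item For \(\cD_+\), the blocks sum to \(\omega\), giving \(\Phi^{(+)}_t=\id_L\).
\item For \(\cD_Z\), the \(s=1\) block is conjugated by \(Z_L\), giving the stated Pauli channel.
\end{itemize}
Rewriting with \(\cos^2\Omega t=\tfrac12(1+\cos2\Omega t)\) exhibits the single frequency \(2\Omega\). This matches the visible eigenvalues \(\pm2i\Omega\) of Step 1, and Proposition \ref{prop:general-code-oscillation} then classifies the contribution as undamped oscillatory.

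\textbf{Step 4: Heisenberg-picture cross-check.} To connect with the code-visible space, I would also verify the result via \(\cK(\cD)\).
\begin{itemize}
\item For \(\cD_+\), \(\cD_+^\dagger(F)=I_{\rm syn}\otimes F\) commutes with \(H_{\rm leak}\). It is therefore annihilated by \(\cL^\dagger\), so \(\cK(\cD_+)\) consists of fixed points. Corollary \ref{cor:logical-channel-constant} then confirms constancy.
\item For \(\cD_Z\), \(\cD_Z^\dagger(F)=\ket0\bra0\otimes F+\ket1\bra1\otimes Z_LFZ_L\). When \(F\) anticommutes with \(Z_L\), this operator contains the component \(\sigma_z\otimes F\). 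The Krylov closure of \(\sigma_z\) under \([\sigma_x,\cdot\,]\) produces \(\sigma_y\), so the closure is the pair \(\{\sigma_z,\sigma_y\}\otimes F\), on which \(A\) has eigenvalues \(\pm2i\Omega\).
\end{itemize}
Pairing with \(\ket0\bra0\otimes\omega\) gives a nonzero \(\Gamma\). Hence these eigenvalues lie in \(\sigma_{\rm vis}(\omega)\) for suitable \(\omega\).

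\textbf{Main obstacle.} There is no real difficulty here. The only point needing care is bookkeeping: keeping track of which syndrome coherences each recovery map discards. This is what makes the \(\pm2i\Omega\) modes invisible under \(\cD_+\) but visible under \(\cD_Z\).
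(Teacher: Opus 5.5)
Your proposal is correct and follows the paper's proof essentially step for step. Like the paper, you evolve the encoded vector under $e^{-itH_{\rm leak}}$, keep the diagonal syndrome blocks $\cos^2(\Omega t)\,\omega$ and $\sin^2(\Omega t)\,\omega$, apply the two recovery maps, and read the frequencies $\pm2\Omega$ off the levels $\pm\Omega$ of $H_{\rm leak}$. Your Step 4 is a correct addition that the paper leaves implicit: $\cK(\cD_+)=\operatorname{span}\{I_{\rm syn}\otimes F\}$ with $A=0$, whereas for $\cD_Z$ the blocks $\operatorname{span}\{\sigma_z\otimes F,\sigma_y\otimes F\}$ carry $\pm2i\Omega$, which makes the ``zero versus nonzero logical image'' claim precise in terms of $\sigma_{\rm vis}(\omega)$. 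Your qualifier ``for suitable $\omega$'' is genuinely needed, since $\Phi^{(Z)}_t$ fixes $Z_L$-diagonal states.
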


\begin{proof}
Unitary evolution maps the encoded vector to
\[
    e^{-itH_{\rm leak}}
    (\ket0_{\rm syn}\otimes\ket{\psi})
    =
    \bigl(\cos(\Omega t)\ket0_{\rm syn}
    -i\sin(\Omega t)\ket1_{\rm syn}\bigr)
    \otimes\ket{\psi}.
\]
For an arbitrary logical density matrix \(\omega\), after evolution the
diagonal syndrome blocks equal
\[
    \cos^2(\Omega t)\,\omega,
    \qquad
    \sin^2(\Omega t)\,\omega,
\]
while the off-diagonal blocks are proportional to
\(\cos(\Omega t)\sin(\Omega t)\omega\). Both recovery maps given in the
statement measure the syndrome and therefore discard the off-diagonal
syndrome blocks. Recovery \(\cD_+\) adds the two diagonal branches with the
same logical frame, giving
\[
    (\cos^2\Omega t+\sin^2\Omega t)\omega=\omega.
\]
Recovery \(\cD_Z\) applies an additional logical \(Z_L\) to the branch
\(s=1\), from which the stated formula for \(\Phi_t^{(Z)}\) follows.
Finally, \(H_{\rm leak}\) has levels \(\pm\Omega\), so the commutator
Liouvillian contains frequencies \(\pm2\Omega\).
\end{proof}

\begin{remark}
The recovery maps \(\cD_+\) and \(\cD_Z\) are incomparable under the
logical post-processing relation from Theorem \ref{thm:recovery-comparison}.
After \(\cD_+\), information about the syndrome branch is erased, whereas
\(\cD_Z\) uses this branch to select the logical Pauli frame. Hence a
Pauli description of the syndrome probability alone is not enough: what
matters is how the decoder maps the syndrome branch onto the logical
frame.
\end{remark}

\subsection{Coherently dissipative logical oscillation}

Suppose that after code reduction and recovery, the effective logical
generator on a single qubit has the form
\begin{equation}
\label{eq:logical-driven-dephasing}
    \cL_L(\omega)
    =
    -i\frac{\Omega}{2}[\overline X,\omega]
    +
    \Gamma(\overline Z\omega\overline Z-\omega).
\end{equation}
In Bloch-vector coordinates \(\omega=(I+r_x\overline X+r_y\overline
Y+r_z\overline Z)/2\) we obtain
\[
    \dot r_x=-2\Gamma r_x,
    \qquad
    \frac{d}{dt}
    \begin{pmatrix}
        r_y\\ r_z
    \end{pmatrix}
    =
    \begin{pmatrix}
        -2\Gamma & -\Omega\\
        \Omega & 0
    \end{pmatrix}
    \begin{pmatrix}
        r_y\\ r_z
    \end{pmatrix}.
\]
The eigenvalues of the second block are
\[
    -\Gamma\pm\sqrt{\Gamma^2-\Omega^2}.
\]
For \(\Omega>\Gamma\) they form a complex conjugate pair
\[
    -\Gamma\pm i\sqrt{\Omega^2-\Gamma^2}.
\]
Consequently, the logical channel contains damped oscillations. The
mechanism is the noncommutativity of the logical Hamiltonian and the
logical dephasing.

\subsection{Non-Pauli noise and full visible reduction}

Amplitude damping with jumps
\[
    \sigma_-^{(i)}=\ket0\bra1_i
\]
gives non-Pauli noise. Even for a stabilizer code, such jumps create
superpositions of Pauli components and can mix syndrome coherences with
logical observables. In this situation the exact reduction is given by
the full space of decodable observables.

The space \(\cK_{\rm code}\) from \eqref{eq:k-code-definition} shows which
non-Pauli coherences become logically visible after the chosen recovery
map, and which belong to the kernel of the logical projection.

\section{Quantum memory and visible storage modes}
\label{sec:quantum-memory}

Quantum memory is the central operational regime of a quantum
error-correcting code. Its task is to preserve an unknown logical state
for a time \(T\). The natural object of analysis is the storage channel
that connects the initial and final logical states.

Without intermediate correction, the memory channel over time \(T\)
coincides with the code projection already used:
\[
    \Phi_T^{\rm pass}
    =
    \cD e^{T\cL}\cE .
\]
This is passive memory. Active quantum memory differs in that recovery is
applied periodically. Let \(\tau\) be the duration of one cycle, and let
\(\cR:\cB(\cH_P)\to\cB(\cH_P)\) be the physical recovery map that measures
the syndrome, applies a correction, and returns the state to the code
region. After \(m\) cycles, the physical state is
\[
    \rho_m
    =
    (\cR e^{\tau\cL})^m\cE(\omega),
    \qquad
    T=m\tau .
\]
If \(\cD_0\) denotes the final decoding from the code space, the memory
logical channel has the form
\[
    \Phi_{m,\tau}^{\rm mem}
    =
    \cD_0(\cR e^{\tau\cL})^m\cE .
\]

In the idealized model of instantaneous recovery, a stronger
code-returning condition often holds:
\begin{equation}
\label{eq:code-returning-recovery}
    \cR e^{\tau\cL}\cE
    =
    \cE\Psi_\tau ,
    \qquad
    \Psi_\tau:=\cD_0\cR e^{\tau\cL}\cE .
\end{equation}
Then the entire memory is described by powers of a single-cycle logical
channel:
\[
    \Phi_{m,\tau}^{\rm mem}
    =
    \Psi_\tau^m .
\]
It is precisely here that quantum error correction connects to visible
Liouvillian modes: the code and the recovery map replace the physical
generator \(\cL\) by a discrete logical channel \(\Psi_\tau\), and the
long-time quality of the memory is determined by the visible spectrum of
\(\Psi_\tau\).

\begin{definition}
An eigenvalue \(\zeta\) of the channel \(\Psi_\tau^\dagger\) is called a
\emph{memory-visible mode} if the corresponding generalized eigencomponent
gives a nonzero contribution to some matrix element
\[
    \operatorname{Tr}[F\Psi_\tau^m(\omega)]
\]
for some logical observable \(F\) and some logical state \(\omega\).
\end{definition}

\begin{theorem}[Visible modes of quantum memory]
\label{thm:memory-visible-modes}
Suppose the code-returning condition \eqref{eq:code-returning-recovery}
holds. Then for any logical observable \(F\) and initial state \(\omega\),
the quantity
\[
    \operatorname{Tr}[F\Phi_{m,\tau}^{\rm mem}(\omega)]
\]
is a finite linear combination of terms
\[
    m^q \zeta^m,
    \qquad
    \zeta\in\sigma(\Psi_\tau^\dagger).
\]
Complex memory-visible \(\zeta=|\zeta|e^{i\theta}\) give discrete
oscillations and possible fidelity revivals. If all nontrivial visible
eigenvalues satisfy \(|\zeta|<1\), the memory loses the corresponding
logical information exponentially in the number of cycles. The dominant
visible modulus
\[
    \eta_\tau
    =
    \max\{|\zeta|\colon \zeta\neq1,\ \zeta
    \text{ is memory-visible}\}
\]
sets the leading storage-time scale
\begin{equation}
\label{eq:memory-time-visible-mode}
    T_{\rm mem}(\varepsilon)
    \simeq
    \frac{\tau\log(C/\varepsilon)}{-\log\eta_\tau},
\end{equation}
if the dominant visible part is diagonalizable and one wants to suppress
its contribution to a level \(\varepsilon\).
\end{theorem}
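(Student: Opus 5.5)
Under the code-returning condition \eqref{eq:code-returning-recovery}, I first show by induction on \(m\) that \(\Phi_{m,\tau}^{\rm mem}=\Psi_\tau^m\). The inductive step rewrites \((\cR e^{\tau\cL})^{m}\cE\) as \(\cE\Psi_\tau^{m}\). It uses \(\cD_0\cE=\id_L\), which holds because \(\cD_0\) is the decoding from the code space. The matrix element then becomes
\[
    \operatorname{Tr}[F\Psi_\tau^m(\omega)]
    =
    \operatorname{Tr}[(\Psi_\tau^\dagger)^m(F)\,\omega],
\]
so everything reduces to powers of the finite matrix \(\Psi_\tau^\dagger\) acting on the \(d_L^2\)-dimensional space \(\cB(\cH_L)\). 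This is the discrete analogue of Proposition \ref{prop:visible-subspace}; the role of \(e^{tA}\) is now played by \((\Psi_\tau^\dagger)^m\).

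Next I take the Jordan decomposition of \(\Psi_\tau^\dagger\) and split \(F=\sum_\zeta\sum_{q}F_{\zeta,q}\) along its generalized eigenspaces. On a Jordan block \(\zeta I+N\) with \(N^\nu=0\), the binomial formula gives
\[
    (\zeta I+N)^m=\sum_{q<\nu}\binom{m}{q}\zeta^{m-q}N^q .
\]
The binomial coefficients are polynomials in \(m\) of degree at most \(q\), so the pairing is a finite combination of \(m^q\zeta^m\), handling the blocks with \(\zeta=0\) separately since they vanish for \(m\ge\nu\). Only the components with nonzero pairing against \(\omega\) survive, and these define the memory-visible modes. For \(\zeta=|\zeta|e^{i\theta}\), combining the conjugate pair (the matrix elements are real for Hermitian \(F\)) gives \(|\zeta|^m\cos(m\theta)\) and \(|\zeta|^m\sin(m\theta)\). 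This accounts for the discrete oscillations and revivals.

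For the decay and time-scale statements, I use that \(\Psi_\tau\) is CPTP, so \(\Psi_\tau^\dagger\) is unital and positive and hence a contraction in operator norm. This forces every eigenvalue to satisfy \(|\zeta|\le1\), and forces semisimplicity on the unit circle, by the same boundedness argument as in Theorem \ref{thm:strict-visible-classification}. Each visible term with \(|\zeta|<1\) is then bounded by \(C'm^{q}|\zeta|^m\), which decays exponentially. Under the diagonalizability assumption on the dominant visible part, I would write the deviation of the memory channel from its \(\zeta=1\) component as
\[
    \bigl|\operatorname{Tr}[F\Psi_\tau^m\omega]-(\zeta=1\ \text{part})\bigr|
    \le C\eta_\tau^m+o(\eta_\tau^m).
\]
Setting \(C\eta_\tau^m=\varepsilon\) and \(T=m\tau\) gives \eqref{eq:memory-time-visible-mode}.

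The main obstacle is making this last step honest. Two problems arise. First, non-dominant and other unimodular visible modes with \(\zeta\ne1\) (where \(|\zeta|=1\) but the mode is not stationary) can spoil the clean bound. Second, the constant \(C\) depends on the conditioning of the eigenbasis. I would therefore state the result as an asymptotic \(\simeq\), assume \(\eta_\tau<1\), and absorb all subleading terms \(m^q|\zeta|^m\) with \(|\zeta|<\eta_\tau\) into the \(o(\eta_\tau^m)\) correction.
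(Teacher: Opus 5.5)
Your proposal is correct and follows the paper's own proof essentially step by step: induction from the code-returning condition to $\Phi^{\rm mem}_{m,\tau}=\Psi_\tau^m$, then a Jordan decomposition of $\Psi_\tau^\dagger$, then the contraction property of the unital positive map $\Psi_\tau^\dagger$, which puts the spectrum in the unit disk with semisimple peripheral blocks, and finally the substitution $C\eta_\tau^m=\varepsilon$, $m=T/\tau$. Your additional care (the binomial expansion of Jordan blocks, the nilpotent $\zeta=0$ blocks, the explicit requirement $\eta_\tau<1$) only fills in details the paper leaves implicit. One small correction: the induction itself uses only the code-returning condition, and $\cD_0\cE=\id_L$ is needed only in the last identification, where it can even be replaced by $\cD_0\cE\Psi_\tau=\Psi_\tau$, obtained by applying $\cD_0$ to that condition.
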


\begin{proof}
By induction, \eqref{eq:code-returning-recovery} gives
\[
    (\cR e^{\tau\cL})^m\cE=\cE\Psi_\tau^m,
\]
so the final logical channel equals \(\Psi_\tau^m\). The remaining
statement is a Jordan decomposition of the finite-dimensional operator
\(\Psi_\tau^\dagger\). Since \(\Psi_\tau\) is a CPTP channel,
\(\Psi_\tau^\dagger\) is unital and positive, so its spectrum lies in the
unit disk, and the Jordan blocks on the circle \(|\zeta|=1\) are
semisimple. If the dominant visible contribution has modulus
\(\eta_\tau<1\) and is diagonalizable, its magnitude is bounded as
\(C\eta_\tau^m\). Substituting \(m=T/\tau\) gives
\eqref{eq:memory-time-visible-mode}.
\end{proof}

\begin{proposition}[How error correction increases memory time]
\label{prop:error-correction-memory-scaling}
Let the single-cycle logical channel have the small-step expansion
\begin{equation}
\label{eq:cycle-small-step}
    \Psi_\tau
    =
    \id_L+\tau^{r+1}\mathcal G+O(\tau^{r+2}),
    \qquad
    r\ge0,
\end{equation}
and let \(\mu_j\) be the visible eigenvalues of \(\mathcal G^\dagger\).
Then the corresponding eigenvalues of \(\Psi_\tau^\dagger\) have the form
\[
    \zeta_j(\tau)
    =
    1+\tau^{r+1}\mu_j+O(\tau^{r+2}),
\]
and the effective logical decay rates per physical time equal
\[
    -\frac1{\tau}\log|\zeta_j(\tau)|
    =
    -\tau^r\Re\mu_j+O(\tau^{r+1}).
\]
Consequently, if the recovery map suppresses the single-cycle logical
error to order \(O(\tau^{r+1})\), the leading rate of memory loss decreases
as \(O(\tau^r)\).
\end{proposition}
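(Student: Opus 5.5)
The plan is to treat \(\Psi_\tau^\dagger\) as a perturbation of the identity and read off its eigenvalues directly from the expansion \eqref{eq:cycle-small-step}. Taking adjoints gives
\[
    \Psi_\tau^\dagger=\id_L+\tau^{r+1}\mathcal G^\dagger+O(\tau^{r+2}),
\]
which I will rewrite as \(\Psi_\tau^\dagger=\id_L+\tau^{r+1}B(\tau)\) with \(B(\tau)=\mathcal G^\dagger+O(\tau)\). The eigenvalues of \(\Psi_\tau^\dagger\) are then exactly \(1+\tau^{r+1}\beta\), where \(\beta\) runs over the eigenvalues of \(B(\tau)\). The identity shifts the spectrum trivially, so no degenerate perturbation theory around the highly degenerate eigenvalue \(1\) is needed. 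The same decomposition carries over to generalized eigenspaces, so visibility, in the sense of the memory-visible definition before Theorem \ref{thm:memory-visible-modes}, is inherited. A mode of \(\Psi_\tau^\dagger\) is visible iff the corresponding spectral projection of \(B(\tau)\) pairs nontrivially with some \(F\) and \(\omega\). As \(\tau\to0\), these projections converge, at least in total over each \(\lambda\)-group, to those of \(\mathcal G^\dagger\).

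The second step is to relate the eigenvalues of \(B(\tau)\) to the \(\mu_j\). By continuity of the spectrum, each eigenvalue of \(B(\tau)\) lies within \(o(1)\) of some \(\mu_j\). To get the \(O(\tau)\) correction claimed in the statement, I would assume that the visible \(\mu_j\) are semisimple, or simple. In that case first-order (Bauer–Fike / Kato) perturbation gives \(\beta_j(\tau)=\mu_j+O(\tau)\), and hence
\[
    \zeta_j(\tau)=1+\tau^{r+1}\mu_j+O(\tau^{r+2}).
\]
This is the \textbf{main obstacle}. For a nontrivial Jordan block of \(\mathcal G^\dagger\), the perturbation only gives \(O(\tau^{1/\nu})\) corrections (Puiseux series). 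The honest route is therefore to state the result for diagonalizable visible parts, matching the diagonalizability hypothesis already used in Theorem \ref{thm:memory-visible-modes}. Otherwise one must weaken the remainder to \(o(\tau^{r+1})\), which still suffices for the leading-order rate.

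The final step is the rate. I will write \(\log|\zeta_j|=\Re\log\zeta_j\) and use \(\log(1+z)=z+O(|z|^2)\) with \(z=\tau^{r+1}\mu_j+O(\tau^{r+2})\). This gives \(\log|\zeta_j(\tau)|=\tau^{r+1}\Re\mu_j+O(\tau^{r+2})\), since \(O(\tau^{2r+2})\subseteq O(\tau^{r+2})\) for \(r\ge0\). Dividing by \(-\tau\) yields the rate
\[
    -\frac1\tau\log|\zeta_j(\tau)|=-\tau^r\Re\mu_j+O(\tau^{r+1}).
\]
Note that \(\Re\mu_j\le0\), because \(|\zeta_j|\le1\) by Theorem \ref{thm:memory-visible-modes}; so this rate is nonnegative. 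Substituting into \eqref{eq:memory-time-visible-mode} with \(\eta_\tau=\max|\zeta_j|\) gives a storage time scaling as \(\tau^{-r}\). This is the claimed decrease of the memory-loss rate as \(O(\tau^r)\), and it is analogous to the \(3\gamma^2T\Delta t\) behaviour in the repetition-code remark, which corresponds to \(r=1\).
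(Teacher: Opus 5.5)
Your route is the paper's route. The paper's proof invokes ``ordinary finite-dimensional eigenvalue perturbation theory'' on \eqref{eq:cycle-small-step} to obtain \(\zeta_j(\tau)\). It then expands \(\log|1+\tau^{r+1}\mu_j+O(\tau^{r+2})|\) and divides by \(\tau\), exactly as in your last step. Your additions are worth keeping:
\begin{itemize}
\item The rescaling \(\Psi_\tau^\dagger=\id_L+\tau^{r+1}B(\tau)\) makes explicit why the fully degenerate unperturbed eigenvalue \(1\) causes no trouble.
\item Your Jordan-block caveat is a genuine correction that the paper's one-line appeal hides. An \(O(\tau)\) shift of the eigenvalues of \(B(\tau)\) is guaranteed only when \(\mu_j\) is semisimple.
\end{itemize}

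Non-semisimple \(\mu_j\) do occur for admissible \(\mathcal G\). The paper's own generator \eqref{eq:logical-driven-dephasing} at \(\Omega=\Gamma\) has a \(2\times2\) Jordan block at \(-\Gamma\) in the \((r_y,r_z)\) sector, hence also for its adjoint. Take the CPTP family \(\Psi_\tau=\exp[\tau(\cL_L+\tau\mathcal J)]\) with \(\mathcal J(\omega)=\kappa(\overline Z\omega\overline Z-\omega)\). Then \(r=0\), \(\mathcal G=\cL_L\), and the double eigenvalue splits as \(\zeta=1-\Gamma\tau\pm\sqrt{2\Gamma\kappa}\,\tau^{3/2}+O(\tau^2)\). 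The decay rate is correspondingly \(\Gamma\mp\sqrt{2\Gamma\kappa}\,\tau^{1/2}+O(\tau)\). So both stated remainders fail, while the leading term \(-\tau^r\Re\mu_j\) survives.

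Your two repairs are the right ones: either assume the visible part is diagonalizable, as in Theorem \ref{thm:memory-visible-modes}, or weaken the remainders to \(o(\tau^{r+1})\) and \(o(\tau^{r})\). One simplification: the visibility bookkeeping in your first step can be dropped. Since \(F\) and \(\omega\) range over spanning sets of \(\cB(\cH_L)\), every nonzero eigenvalue of \(\Psi_\tau^\dagger\) is automatically memory-visible.
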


\begin{proof}
The formula for \(\zeta_j(\tau)\) follows from ordinary finite-dimensional
eigenvalue perturbation theory applied to
\eqref{eq:cycle-small-step}. Next,
\[
    \log|1+\tau^{r+1}\mu_j+O(\tau^{r+2})|
    =
    \tau^{r+1}\Re\mu_j+O(\tau^{r+2}),
\]
and dividing by \(\tau\) gives the stated rate. If, without correction,
the logical channel has a linear error per cycle, then \(r=0\). If the
code and recovery map remove the entire first order and leave a logical
error probability of \(O(\tau^2)\), as in the three-qubit repetition code
of \eqref{eq:repetition-logical-error}, then \(r=1\), and the effective
memory decay rate is proportional to \(\tau\).
\end{proof}

\section{Asymptotic logical channels}
\label{sec:asymptotics}

Suppose the physical semigroup has a limiting projection onto a set of
asymptotic states, or, more generally, suppose that for a given code the
limit
\[
    \Phi_\infty
    =
    \lim_{t\to\infty}\cD e^{t\cL}\cE
\]
exists in channel norm. Then the asymptotics distinguishable by the code
is described by the image of the stationary physical asymptotics under
\(\cD\), restricted to encoded inputs.

\begin{theorem}[Classification of limiting logical channels]
\label{thm:asymptotic-code-channels}
Suppose the limit \(\Phi_\infty=\lim_{t\to\infty}\Phi_t\) exists. Then
\(\Phi_\infty\) is a CPTP channel on the logical system and is completely
determined by the zero visible modes of \(A=\cL^\dagger|_{\cK_{\rm code}}\);
if \(A\) has no nonzero purely imaginary visible eigenvalues, then all
decaying modes vanish and the limiting channel is a projection onto the
visible stationary space; moreover, the set of possible limiting logical
channels has dimension bounded above by \(d_L^4-d_L^2\), i.e., by the
dimension of the affine space of CPTP channels.
\end{theorem}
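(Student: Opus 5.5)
The plan is to prove the four assertions of Theorem \ref{thm:asymptotic-code-channels} in order, always working through the pairing \eqref{eq:heisenberg-code-pairing} and the restriction \(A=\cL^\dagger|_{\cK_{\rm code}}\) from Proposition \ref{prop:visible-subspace}.

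\textbf{CPTP property.} Each \(\Phi_t=\cD e^{t\cL}\cE\) is a composition of CPTP maps, so it lies in the set of CPTP maps on \(\cB(\cH_L)\). This set is closed in finite dimension, being cut out by positivity of the Choi matrix and a linear trace condition. Hence the norm limit \(\Phi_\infty\) is CPTP.

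\textbf{Dependence on zero modes.} By Proposition \ref{prop:general-code-oscillation}, each matrix element \(\operatorname{Tr}[F_j\Phi_t(\omega)]\) is a finite combination of \(t^m e^{\lambda t}\) with \(\lambda\) visible. By Theorem \ref{thm:strict-visible-classification}, every such \(\lambda\) has \(\Re\lambda\le 0\), and blocks on the imaginary axis are semisimple. Existence of the limit then forces two things, since distinct exponentials are linearly independent and a function \(\sum_b c_b e^{ibt}\) converges only if it is constant:
\begin{itemize}
\item the coefficients of visible \(\lambda=ib\) with \(b\neq0\) vanish;
\item terms with \(\Re\lambda<0\) tend to zero.
\end{itemize}
So \(\operatorname{Tr}[F_j\Phi_\infty(\omega)]=\operatorname{Tr}[X_{0}^{(j)}V\omega V^\dagger]\), where \(X_0^{(j)}\) is the spectral projection of \(\cD^\dagger(F_j)\) onto \(\ker A\). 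On the imaginary axis the generalized eigenspace for \(0\) is just \(\ker A\) by semisimplicity. Thus \(\Phi_\infty\) is determined by the zero visible modes.

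\textbf{Projection onto the visible stationary space.} Assume there are no nonzero purely imaginary visible eigenvalues. Let \(P_0\) be the spectral (Riesz) projection of \(A\) onto \(\ker A\). Then \(e^{tA}\to P_0\), with all decaying modes vanishing. I will read ``projection'' at the level of the realization: \(\Phi_\infty^\dagger(F)=V^\dagger P_0\cD^\dagger(F)V\), and \(P_0^2=P_0\) on \(\cK_{\rm code}\).

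I would flag as the main obstacle that \(\Phi_\infty\) need not be idempotent as a logical channel, because \(\cE\cD\neq\id\) on \(\cK_{\rm code}\) in general. The statement is therefore best understood as describing \(P_0\) on the minimal realization \(\widehat{\cK}_{\rm code}\). Passing to \(\widehat{\cK}_{\rm code}\) is harmless, since \(\cN_{\rm obs}\) is \(A\)-invariant by Theorem \ref{thm:krylov-visible-algorithm} and \(P_0\), being a polynomial in \(A\), descends to the quotient.

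\textbf{Dimension bound.} Linear maps \(\cB(\cH_L)\to\cB(\cH_L)\) form a space of complex dimension \(d_L^4\), hence real dimension \(d_L^4\) for Hermiticity-preserving maps. Trace preservation \(\Phi^\dagger(I)=I\) imposes \(d_L^2\) independent real linear constraints. So CPTP maps lie in an affine subspace of dimension \(d_L^4-d_L^2\), and the CP condition only cuts out a full-dimensional convex subset. Any family of limiting channels \(\Phi_\infty\), obtained by varying \(\cL\) or the data inside the class considered, is a subset of this affine space, which gives the bound. If finer content is wanted, one can also note that \(\Phi_\infty\) factors through \(\ker A\), so its rank as a linear map is at most \(\min(d_L^2,\dim\ker\widehat A)\).
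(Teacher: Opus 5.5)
Your proposal is correct and follows the paper's route. You use closedness of the CPTP set for the limit, and the Jordan expansion of \(e^{tA}\) from Proposition~\ref{prop:visible-subspace} with the boundedness from Theorem~\ref{thm:strict-visible-classification} to show that existence of the limit removes nonzero imaginary visible modes and leaves only the \(\lambda=0\) components; you then count \(d_L^2\) trace-preservation constraints on the \(d_L^4\)-dimensional space of Hermiticity-preserving maps. Your observation that \(\Phi_\infty\) itself need not be idempotent, so ``projection'' must be read as \(P_0\) on the realization, is a correct sharpening the paper leaves implicit. One small correction: the hypothesis excludes only \emph{visible} imaginary eigenvalues, so \(A\) may still have invisible rotating components and \(e^{tA}\to P_0\) need not hold on \(\cK_{\rm code}\) itself; it holds after pairing with encoded states (or on \(\widehat{\cK}_{\rm code}\), where minimality makes every eigenvalue visible), and that is all your formula \(\Phi_\infty^\dagger(F)=V^\dagger P_0\cD^\dagger(F)V\) uses.
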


\begin{proof}
Each \(\Phi_t\) is a composition of CPTP channels. Hence the limit in
norm is also a CPTP channel. By Proposition \ref{prop:visible-subspace},
all matrix elements of \(\Phi_t\) are given by \(e^{tA}\). If the limit
exists, then in the Jordan decomposition only components with
\(\lambda=0\) survive; decaying modes vanish, and nonzero purely
imaginary visible modes are incompatible with the existence of the limit.
The last dimension bound is the standard estimate for the space of linear
maps \(\cB(\cH_L)\to\cB(\cH_L)\) subject to trace-preservation
conditions.
\end{proof}

Typical limiting regimes are as follows: \textbf{ideal memory}, for which
\(\Phi_\infty=\id_L\); \textbf{logical Pauli noise},
\[
    \Phi_\infty(\omega)
    =
    \sum_\ell p_\ell^\infty
    \overline P_\ell\omega\overline P_\ell^\dagger;
\]
\textbf{partial loss of logical information}, e.g., a dephasing
projection preserving only the diagonal in some logical basis; and
\textbf{peripheral dynamics}, in which the logical channel periodically
changes the logical frame.

\section{Conclusion}

Coherent noise creates a gap between continuous physical dynamics and the
discrete Pauli models used in fast estimates of logical errors. A quantum
code together with a chosen recovery map defines an output map for the
physical Lindbladian dynamics. This map takes the full semigroup
\[
    e^{t\cL}
    \mapsto
    \Phi_t=\cD e^{t\cL}\cE ,
\]
that is, it defines a linear system with the logical channel playing the
role of the observed output. Hence the code-visible modes are the QEC
version of the observable modes of a finite-dimensional system.

The technical reduction consists in constructing the Krylov closure of
the decodable logical observables, factoring out the unobservable
subspace, and analyzing the induced operator
\[
    \cL^\dagger|_{\cK_{\rm code}}.
\]
The computational point of the reduction is to replace the full physical
evolution with a matrix on \(\widehat{\cK}_{\rm code}\) that contains
exactly the modes affecting the logical channel. The short-time expansion
is consistent with the Knill--Laflamme conditions: correctable Lindblad
errors give zero dissipative first order after recovery. The further
spectral classification is an application of the ordinary Jordan
decomposition to the minimal observable realization of the logical
channel; complex visible eigenvalues are a signature of phase dynamics
that survived encoding and decoding.

The operational contribution of this formulation is that the recovery map
enters the problem as a spectral filter. Logical post-processing can hide
part of the visible modes, and incomparable recovery maps can select
different frequencies of the same physical generator. This gives a
linear-algebraic model of decoder mismatch for coherent noise. For
stabilizer codes under Pauli noise, the reduction takes the form of a
finite Markov problem on syndrome-logical classes; for non-Pauli and
coherent noise, the exact reduction goes through the full space of
decodable observables.

For quantum memory, the same language gives a discrete observable
realization: active memory with correction cycles is described by powers
of the logical channel \(\Psi_\tau\). The storage time is determined by
the dominant visible eigenvalue modulus of this channel, and suppressing
the single-cycle logical error to higher order in \(\tau\) reduces the
visible decay rates. In this form, the method serves as a
linear-algebraic tool for comparing codes, recovery maps, and noise
models at the level of the logical channel.

\bibliography{decoherence_refs}

\end{document}